\def\C{{\mathbb C}}
\def\e{\mathrm{e}}
\def\f{f}
\def\F{{\mathcal F}}
\def\H{{\mathcal H}}
\def\Hrond{\mathscr H}
\def\R{{\mathbb R}}
\def\SS{\mathscr S}
\def\T{{\mathbb T}}
\def\U{\mathcal U}
\def\V{\mathcal V}
\def\ii{i}
\def\wind{\mathrm{wind}}
\def\0{\mathfrak r}
\def\sgn{\mathrm{sgn}}
\newcommand{\dom}{\mathop{\mathrm{dom}}}
\newcommand{\tr}{\mathop{\mathrm{tr}}}
\def\d{\mathrm{d}}
\def\det{\mathrm{det}}
\def\CD{{C\!D}}
\def\2{\mathfrak{int}}
\def\Pv{\mathrm{P.v.}}
\def\diag{\mathrm{diag}}
\def\X{\mathsf X}
\def\D{\mathsf D}
\newtheorem{theorem}{Theorem}[section]
\newtheorem{prop}[theorem]{Proposition}
\newtheorem{lemma}[theorem]{Lemma}
\theoremstyle{definition}
\newtheorem{remark}[theorem]{Remark}
\newtheorem{rem}[theorem]{\bf Remark}
\def\bpm{\begin{pmatrix}}
\def\epm{\end{pmatrix}}
\def\bsm{\begin{smallmatrix}}
\def\esm{\end{smallmatrix}}
\DeclareMathOperator{\spec}{\sigma}
\begin{document}

\title{One-dimensional Dirac operators with zero-range interactions: \\
Spectral, scattering, and topological results}

\author{Konstantin Pankrashkin$\,^1$ and Serge Richard$\,^2$}
\date{\small}
\maketitle

\begin{quote}
\begin{itemize}
\item[$^1$] Laboratoire de Math\'ematiques d'Orsay, CNRS UMR 8628, Universit\'e Paris-Sud, B\^atiment 425, 91405 Orsay Cedex, France;\\
E-mail: {\tt konstantin.pankrashkin@math.u-psud.fr}
\item[$^2$] Graduate school of mathematics, Nagoya University,
Chikusa-ku, Nagoya 464-8602, Japan; \\
E-mail: {\tt richard@math.univ-lyon1.fr}
\\[\smallskipamount]
On leave from Universit\'e de Lyon, Universit\'e Lyon I, CNRS UMR5208, Institut Camille Jordan, 43 blvd du 11 novembre 1918, 69622 Villeurbanne Cedex, France
\end{itemize}
\end{quote}

\begin{abstract}
The spectral and scattering theory for $1$-dimensional Dirac operators with mass $m$ and with zero-range interactions are fully investigated. Explicit expressions for the wave operators and for the scattering operator are provided. These new formulae take place in a representation which links, in a suitable way, the energies $-\infty$ and $+\infty$, and which emphasizes the role of $\pm m$. Finally, a topological version of Levinson's theorem is deduced, with the threshold effects at $\pm m$ automatically taken into account.
\end{abstract}

\textbf{2010 Mathematics Subject Classification:} 81U15, 35Q41.

\smallskip

\textbf{Keywords:} Dirac operators, zero-range interaction, wave operators, index theorem.

\section{Introduction}

In a series of recent works on scattering theory, it has been shown that rather explicit formulae for the wave operators do exist and that these operators share structural properties amongst several models. Such new formulae were then at the root of a topological approach of Levinson's theorem. More precisely, it has been shown that this famous theorem, which allows one to compute the number of bound states of a physical system in terms of the scattering part of that system, is in fact an index theorem. Let us stress that an index theorem automatically means a strong robustness of the mentioned link between spectral and scattering properties under perturbations. We refer to \cite{BS,IR,KPR,KR1,KR5,KR6,PR,SB,RT1,RT2,RT3} for such explicit formulae in the context of Schr\"odinger operators, Aharonov-Bohm operators or for the Friedrichs-Faddeev model, and for their applications.

However, despite the variety of models already investigated with this approach, all these models share one common property: the operators describing them have a continuous spectrum made of one single connected part. This feature, which may seem harmless, has been in fact very convenient for the construction of the $C^*$-algebraic framework surrounding the topological approach. Therefore, one of the motivations for looking at Dirac operators was that its continuous spectrum is made of two unbounded disjoint parts. In addition, even if some Levinson's type results exist for this model (see for example \cite{C89,K90,L99,MN85}), it has never been argued that this relation is of topological nature\footnote{In \cite{L07} the analogy between Levinson's theorem for Dirac operators and the Atiyah-Singer index theorem is mentioned, but nothing is deduced from this observation.}. Thus, this work is a first attempt to derive explicit formulae for the wave operators in the context of relativistic operators, and to deduce some topological consequences of such formulae.

Now, as any first attempt, the model under consideration is rather simple (see  \cite{KR1} for its counterpart in the Schr\"odinger case). In fact, we investigate $1$-dimensional massive Dirac operators with the simplest interactions, namely the so called ``zero-range'' or ``point'' interactions. More precisely, we consider the scattering theory for a pair of self-adjoint operators $(H_0,H^{\CD})$, where $H_0$ is the usual free Dirac operator with mass $m>0$ and $H^{\CD}$ is any of the (four parameters family of) self-adjoint extensions which can be constructed from the restriction of $H_0$ to functions which vanish at $0$. Note that such extensions have already been studied in several papers, see for example \cite{alonso,BMN,dabr,cmp}, but our aim and results are different.

More concretely, after reviewing some properties of free Dirac operator $H_0$, we provide in Section~\ref{sec2} a parametrization of all the mentioned self-adjoint extensions using the machinery of boundary triples, and we obtain an explicit resolvent formula in terms of the parameters $(C,D)$.
Based on these formulae, the spectral properties of the Dirac operator with a zero-range interaction are then described in Proposition \ref{propspectral}.

In Section \ref{secscat}, we develop the scattering theory for our model. We first recall the main definition of the wave operators in the time-dependent framework of scattering theory as well as in its stationary approach. The spectral representation of the free operator $H_0$ is then provided.
Based on the stationary expressions for the wave operators, some rather explicit formulae could be derived for them in the spectral representation of $H_0$, but the results would not be very convenient (some unbounded operators would still be present).

Our main surprise, and one of the asset of this work, is that the wave operators can be computed very explicitly in another unitarily equivalent representation which we have called the upside-down representation. The reason for this name comes from the fact that in this representation the thresholds values $\pm m$ are sent to $\pm \infty$ while any neighbourhood of the points $\pm \infty$ is then located near the point $0$. In this representation, which takes place in the Hilbert space $L^2(\R;\C^2)$,
we first show that the wave operators for the pair $(H_0,H^{\CD})$ exist and that the stationary approach leads to the same operators than the time dependent approach. In addition, restricting our attention to one of the wave operators only, we provide an explicit expression  for this operator in terms of a product of a continuous function of the position operator $\X$ and a continuous function of its conjugate operator $\D=-i\frac{\d}{\d x}$, see formula \eqref{egalite}.
Note that the $\X$-factor is tightly dependent of the parameters $(C,D)$, while the factor containing $\D$ does not depend on them at all.
We also note that these factors admit a suitable asymptotic behavior, which allows one to develop the algebraic framework.

In the last section of this paper, we deduce the topological consequences of the explicit formula derived in Section \ref{secscat}. In particular, we derive a topological version of Levinson's theorem which relates the number of bound states of the operator $H^{\CD}$ to the winding number of a certain function which involves the scattering operator but also other operators related to threshold effects. The main result of this section in contained in Theorem \ref{thmtop}. Let us stress that in our approach, the threshold effects are automatically taken into account, namely we don't have to calculate separately the contributions due to the possible half-bound states located at $\pm m$. Note also that the discrepancy of the contributions of the scattering operator for positive or negative energies appears naturally in our framework.

As a conclusion, let us emphasize that the spectral and the scattering theory for Dirac operators with zero-range interactions are fully developed in this work, and that new and explicit formulae for the wave operators are also provided. We expect that such formulae as well as the topological approach of Levinson's theorem will still hold for Dirac operators perturbed by more general potentials.

\noindent{\bf Acknowledgments.} The work was partially supported by ANR NOSEVOL and GDR DYNQUA.

\section{Framework and spectral results}\label{sec2}
\setcounter{equation}{0}

\subsection{Dirac operators with boundary conditions at the origin}

Let $\H$ be the Hilbert space $L^2(\R;\C^2)$ with scalar product and norm denoted by $\langle \cdot,\cdot\rangle$ and $\|\cdot\|$. Its elements are written $f=\big(\bsm f_1\\f_2\esm\big)$. For $m> 0$,
we consider the free Dirac operator $H_0$ defined by
\begin{equation}
     \label{eq-dirac}
H_0=\bpm 0 & -1\\
1 &0
\epm\, \frac{\d}{\d x}
+
\bpm
m & 0\\
0 & -m
\epm
\end{equation}
on the domain $\dom(H_0)=\H^1(\R;\C^2)$. Here, $\H^1(\R;\C^2)$ denotes the Sobolev space on $\R$ of order $1$ and with values in $\C^2$.
A standard computation shows that $H_0$ is a self-adjoint operator in $\H$, and its spectrum equals $(\infty,-m]\cup[m,+\infty)$
and is absolutely continuous. In what follows we need an explicit expression for the Green function of $H_0$, that is, for
the integral kernel $G_0$ of its resolvent. For that purpose, for $z\in \C\setminus \spec(H_0)$ we set $k=k(z)=\sqrt{z^2-m^2}$,
where the branch of the square root
is fixed by the condition $\Im \sqrt{\lambda}>0$ for $\lambda<0$. As a consequence it follows that $\Im k(z)> 0$ for all $z\notin\spec(H_0)$.
Then, let us note that for any $z\in \C$ the equality  $(H_0-z)(H_0+z)=-\Delta-z^2+m^2$ holds on $\H^2(\R;\C^2)$; here $\Delta$ denotes the usual Laplacian in $L^2(\R;\C^2)$.
Therefore, for any $z\notin\spec(H_0)$ we have
\[
(H_0-z)^{-1}=(H_0+z) \big( -\Delta  -(z^2-m^2)\big)^{-1},
\]
and we infer from this equality that
\begin{equation}\label{Green_f}
G^0(x,y;z)= \frac{\ii }{2k}\bpm z+m & -\d/\d x\\
\d/\d x & z-m
\epm
\;\!
\e^{\ii k|x-y|}
=
\bpm
\ii \frac{m+z}{k} & \sgn(x-y)\\
-\,\sgn(x-y) & \ii \frac{k}{m+z}
\epm
\frac{\e^{\ii k |x-y|}}{2}.
\end{equation}

Now, let us denote by $H$ the restriction of $H_0$ to the domain
\[
\dom(H)=\big\{f\in \H^1(\R;\C^2)\mid f(0)=0\big\}.
\]
This operator is not self-adjoint any more, but symmetric with deficiency indices $(2,2)$, see for example \cite[Sec.~5.3]{BMN}.
Furthermore, its adjoint $H^*$ is given by the same expression but acts on the larger domain
$\dom(H^*)=\H^1(\R_-;\C^2)\oplus \H^1(\R_+;\C^2)$.
By a Dirac operator with a zero-range (or delta-type) interaction at the origin we mean any self-adjoint extension of $H$.
Note that there already exist several papers discussing various parameterizations of such extensions, see e.g.~\cite{alonso,BMN,dabr,cmp},
but our approach will be slightly different.

By a direct computation one can check that
$(\C^2,\Gamma_1,\Gamma_2)$ with
\[
\Gamma_1 f=\bpm
f_1(+0)-f_1(-0)\\
f_2(-0)-f_2(+0)

\epm,
\quad
\Gamma_2 f=\frac{1}{2}\,\bpm
f_2(-0)+f_2(+0)\\
f_1(-0)+f_1(+0)
\epm.
\]
is a boundary triple for $H^*$ in the sense that for all $f,g\in\dom (H^*)$ one has
\[
\langle f, H^*g\rangle-\langle H^*f, g\rangle
=\langle \Gamma_1 f,\Gamma_2 g\rangle -\langle \Gamma_2 f,\Gamma_1 g\rangle,
\]
and that the map $\dom(H^*)\ni f\mapsto (\Gamma_1 f,\Gamma_2 f)\in \C^2\times \C^2$ is surjective.
Here we have used the notation $f_j(\pm 0)$ for $\lim_{\varepsilon\searrow 0}f_j(\pm \varepsilon)$ for any $f_j \in \H^1(\R_\pm)$.

Let us now recall a few facts about boundary triples, and
refer to~\cite{BGP} for a brief account, to \cite{DM,gg} for a detailed study,
and to \cite{schm} for a recent textbook presentation.
One of the main interest of boundary triples is that they easily provide a simple description of all self-adjoint extensions of $H$
and a tool for their spectral and scattering analysis.
More precisely, let $C, D\in M_2(\C)$ be $2\times 2$ matrices, and let us denote by $H^{\CD}$  the restriction of $H^*$ to the domain
\begin{equation*}
\dom(H^{\CD}):=\{f \in \dom(H^*)\mid C\Gamma_1 f = D \Gamma_2 f\}\ .
\end{equation*}
Then, the operator $H^{\CD}$ is self-adjoint if and only if the matrices $C$ and $D$ satisfy the following conditions:
\begin{equation}
\label{eq-mcd}
\text{(i) $CD^*$ is self-adjoint,\qquad  (ii) $\det(CC^* + DD^*)\neq 0$.}
\end{equation}
Moreover, any self-adjoint extension of $H$ in $\H$ is equal to one of the operator $H^{\CD}$.
For simplicity, a pair $(C,D)$ of elements of $M_2(\C)$ satisfying relations \eqref{eq-mcd} will be called an admissible pair.
Note that with this notation, the operator $H^{10}\equiv H^*|_{\ker(\Gamma_1)}$ is exactly the above
free Dirac operator $H_0$, which is going to play the role of our reference operator. The other operators $H^\CD$ will be interpreted as its perturbations.

Let us stress that the above parametrization is not unique in the sense that one can have $H^{\CD}=H^{C'\!D'}$ for two different admissible pairs
$(C,D)$ and $(C',D')$. This is the case if and only if $C=KC'$ and $D=KD'$ for a non-degenerate $2\times 2$ matrix $K$.
One may obtain a one-to-one parametrization between the $2\times 2$ unitary matrices $U$ and the self-adjoint extensions of $H$
by setting $C:=\frac{1}{2}(1-U)$ and $D:=\frac{i}{2}(1+U)$. We refer to the papers \cite{alonso,dabr} for alternative one-to-one parameterizations.

\begin{remark}
We note that the papers \cite{BMN,cmp} also make use of boundary triples for studying perturbed Dirac operators, but
with a different choice for the maps $\Gamma_1$ and $\Gamma_2$, which leads to a rather complicated reference operator.
Our choice is motivated by obtaining simpler expressions in the subsequent computations.
\end{remark}

\subsection{Weyl function and resolvent formula}

Our aim is to obtain a resolvent formula for the self-adjoint extensions $H^\CD$. First, we compute explicitly
two operator-valued maps playing a key role in the framework of boundary triples, namely the map
\[
\gamma(z):=\big(\Gamma_1\big|_{\ker(H^*-z)}\big)^{-1}
\]
and the Weyl function $M(z):=\Gamma_2\gamma(z)$.
By a direct computation one obtains for any $z\notin\spec(H_0)$, $\xi = \big(\bsm \xi_1\\ \xi_2\esm \big)\in \C^2$ and $x \in \R^*$
that
\begin{equation*}
\left[\gamma(z)\bpm \xi_1 \\ \xi_2 \epm\right](x)=
\bpm
\ii\dfrac{m+z}{k} \,\xi_2+\xi_1\sgn (x)\\[\bigskipamount]
\ii \dfrac{k}{m+z}\,\xi_1-\xi_2\sgn(x)
\epm
\frac{\e^{\ii k|x|}}{2}\\
\equiv \xi_1\, h^1_z(x) +\xi_2\, h^2_z(x),
\end{equation*}
where we have set
\[
h^1_z(x)=\frac{\e^{\ii k|x|}}{2}
\bpm
\sgn (x)\\
\ii \frac{k}{m+z}
\epm,
\quad
h^2_z(x)=\frac{\e^{\ii k|x|}}{2}
\bpm
\ii\frac{m+z}{k}\\
-\,\sgn(x)
\epm\,.
\]
For later use, note that the $\gamma$-function satisfies the following identity \cite[Th.~1.23]{BGP}
\begin{equation}
    \label{gamh}
\gamma(\bar z)^*(H_0-z)f= \Gamma_2f, \quad f\in\dom(H_0), \quad z\notin\spec(H_0).
\end{equation}

Similarly, for the Weyl function we obtain
\begin{equation*}
M(z)\bpm \xi_1\\ \xi_2\epm
=
\frac{1}{2}\,
\bpm
\dfrac{\ii k}{m+z}\, \xi_1\\[\bigskipamount]
\dfrac{\ii (m+z)}{k}\, \xi_2
\epm,
\end{equation*}
{\it i.e.}~$M(z)$ is just the diagonal matrix
\[
M(z)=
\dfrac{1}{2} \;\!\diag\left(
\frac{\ii k}{m+z}, \frac{\ii (m+z)}{k}\right).
\]
In particular, since for $\lambda \in \R$ the following limits hold:
\begin{align*}
k(\lambda+\ii 0)\equiv
\lim_{\varepsilon\searrow 0} k(\lambda+i\varepsilon)&=\begin{cases}
\sgn(\lambda)\,\sqrt{\lambda^2-m^2}, & |\lambda|\ge m,\\
\ii\,\sqrt{m^2-\lambda^2}, & |\lambda|<m,
\end{cases}\\
k(\lambda-\ii 0)\equiv
\lim_{\varepsilon\searrow 0} k(\lambda-i\varepsilon)&=\begin{cases}
-\sgn(\lambda) \sqrt{\lambda^2-m^2}, & |\lambda|\ge m,\\
\ii\,\sqrt{m^2-\lambda^2}, & |\lambda|<m.
\end{cases}
\end{align*}
it follows that for $|\lambda|<m$ one has
$$
M(\lambda\pm i 0) =
\frac{1}{2}\, \diag\left(
-\sqrt{\frac{m-\lambda}{m+\lambda}}, \sqrt{\frac{m+\lambda}{m-\lambda}}\right)
$$
while for $|\lambda|>m$ one has
$$
M(\lambda\pm \ii 0) =
\pm \frac{\ii}{2}\diag\left(
\sqrt{\frac{\lambda-m}{\lambda+m}}, \sqrt{\frac{\lambda+m}{\lambda-m}}\right) .
$$

With these various definitions a few additional relations between the operators $H_0$ and $H^{\CD}$ can be inferred. For example,
for $z\notin \spec(H_0)\cup\spec(H^\CD)$ the matrix $DM(z)-C$ is invertible, and the resolvent formula
\begin{equation}
\label{eq-krein}
(H_0-z)^{-1}-(H^{\CD}-z)^{-1}=\gamma(z)\big(DM(z)-C\big)^{-1}D\gamma(\overline z)^*
\end{equation}
holds.
In addition, a value $\lambda\in(-m,m)$ is an eigenvalue
of $H^{\CD}$ if and only if $\det\big(DM(\lambda+i0)-C\big)=0$, and then one has $\ker(H^{\CD}-\lambda)=\gamma(\lambda)\ker(DM(\lambda+i0)-C)$.
Due to the injectivity of the map $\gamma(\lambda):\C^2 \to \H$, the dimension of $\ker(DM(\lambda+i0)-C)$ corresponds to the multiplicity of the eigenvalue $\lambda$ of $H^{\CD}$.
Finally, for any $\varepsilon\ge 0$, any $\lambda\not \in [-m,m]$ and any admissible pair $(C,D)$, let us mention the obvious equality $M(\lambda-i\epsilon)=M(\lambda+i\varepsilon)^*$
and the identity (see for example \cite[Lem.~6]{PR})~:
\begin{equation}\label{tadj}
\Big[\big(DM(\lambda-i\varepsilon)-C\big)^{-1}D\Big]^*=\big(DM(\lambda+i\varepsilon)-C\big)^{-1}D.
\end{equation}

For the next statement, we need to introduce the set
$$
\Sigma:= (-\infty,-m)\,\cup\,(m,+\infty).
$$
as well as for each $\lambda \in \Sigma$ the $2\times 2$ matrix
\[
B(\lambda) = \frac{1}{\sqrt{2}}\;\!
\diag\left(
\sqrt[4]{\frac{\lambda-m}{\lambda+m}}, \sqrt[4]{\frac{\lambda+m}{\lambda-m}}\right)
\]
which clearly satisfies $iB(\lambda)^2 = M(\lambda+i0)$.
Furthermore, we also set
\[
T_\varepsilon^{\CD}(\lambda):=-2iB(\lambda) \big(DM(\lambda+i\varepsilon)-C\big)^{-1} D\;\!B(\lambda).
\]
We summarize some properties of this operator in the following assertion, whose proof is given
in the Appendix.

\begin{lemma}\label{lem-unt}
For any admissible pair $(C,D)$, the operator $T_\varepsilon^{\CD}(\lambda)$ admits the limit
\begin{equation}\label{defT0}
T_0^{\CD}(\lambda)=-2iB(\lambda) \big(DM(\lambda+i 0)-C\big)^{-1} D\;\!B(\lambda)
\end{equation}
in $M_2(\C)$ as $\varepsilon\searrow0$ locally uniformly
in $\lambda\in\Sigma$. For any $\lambda\in\Sigma$ the matrix $1+T_0^{\CD}(\lambda)$ is unitary,
and the map $\Sigma \ni \lambda \mapsto T_0^{\CD}(\lambda)\in M_2(\C)$
is continuous and admits limits at the boundary points of $\Sigma$, with $T_0^{\CD}(-\infty)=T_0^{\CD}(+\infty)$.
\end{lemma}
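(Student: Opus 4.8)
The plan is to reduce every assertion to a computation with $2\times2$ matrices. A preliminary remark simplifies the bookkeeping: if $(C',D')=(KC,KD)$ for some invertible $K\in M_2(\C)$ then $D'M(z)-C'=K\bigl(DM(z)-C\bigr)$ and $\bigl(D'M(z)-C'\bigr)^{-1}D'=\bigl(DM(z)-C\bigr)^{-1}D$, so that $T_\varepsilon^{\CD}(\lambda)$ and $T_0^{\CD}(\lambda)$ depend only on the self-adjoint operator $H^{\CD}$. I will therefore work throughout with the one-to-one representative $C=\tfrac12(1-U)$, $D=\tfrac{\ii}{2}(1+U)$ with $U\in M_2(\C)$ unitary, for which the relation $M(\lambda+\ii0)=\ii B(\lambda)^2$ yields the convenient identity
\begin{equation*}
DM(\lambda+\ii0)-C=-\tfrac12\bigl[(1+B(\lambda)^2)-U(1-B(\lambda)^2)\bigr],\qquad\lambda\in\Sigma.
\end{equation*}

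I would start with the invertibility of $DM(\lambda+\ii0)-C$ for $\lambda\in\Sigma$, which underpins everything else. Writing $B(\lambda)=\diag\bigl(b_1(\lambda),b_2(\lambda)\bigr)$, one has $b_1(\lambda),b_2(\lambda)>0$ and $b_1(\lambda)b_2(\lambda)=\tfrac12$ for $\lambda\in\Sigma$. If $\xi$ belongs to the kernel of the matrix above then $(1+B(\lambda)^2)\xi=U(1-B(\lambda)^2)\xi$, and taking squared norms while using that $U$ is unitary gives $\sum_j\bigl[(1+b_j^2)^2-(1-b_j^2)^2\bigr]|\xi_j|^2=4\sum_jb_j(\lambda)^2|\xi_j|^2=0$, hence $\xi=0$. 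The same computation with $B(\lambda)^2$ replaced by $\tfrac12$ shows that $\tfrac{\ii}{2}D-C=-\tfrac14(3-U)$ is invertible as well. With this at hand the convergence $T_\varepsilon^{\CD}(\lambda)\to T_0^{\CD}(\lambda)$ is obtained by soft arguments: $(\lambda,\varepsilon)\mapsto M(\lambda+\ii\varepsilon)$ is continuous on $\Sigma\times[0,\infty)$ (explicit formula for $k$), the matrix $DM(\lambda+\ii\varepsilon)-C$ is invertible for $\varepsilon>0$ (non-real argument, hence outside $\spec(H_0)\cup\spec(H^{\CD})$) and, by the above, for $\varepsilon=0$ and $\lambda\in\Sigma$; thus on any compact $K\subset\Sigma$ these matrices stay in a compact set of invertible matrices, on which inversion is uniformly continuous, and since $B$ is continuous and bounded on $K$ the convergence is uniform on $K$.

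For the unitarity, set $P:=DM(\lambda+\ii0)-C$, so that $T_0^{\CD}(\lambda)=-2\ii B(\lambda)P^{-1}DB(\lambda)$ with $B(\lambda)$ self-adjoint. Using $M(\lambda+\ii0)^{*}=-M(\lambda+\ii0)=-\ii B(\lambda)^{2}$ (which follows from $M(\lambda+\ii0)=\ii B(\lambda)^{2}$ with $B(\lambda)$ real) together with condition~(i) of \eqref{eq-mcd} in the form $CD^{*}=DC^{*}$, a direct computation gives
\begin{equation*}
PD^{*}-DP^{*}=2\,DM(\lambda+\ii0)D^{*}=2\ii\,DB(\lambda)^{2}D^{*}.
\end{equation*}
Inserting this into the expansion $\bigl(1+T_0^{\CD}(\lambda)\bigr)\bigl(1+T_0^{\CD}(\lambda)\bigr)^{*}=1+T_0^{\CD}(\lambda)+T_0^{\CD}(\lambda)^{*}+T_0^{\CD}(\lambda)T_0^{\CD}(\lambda)^{*}$ makes everything but the leading $1$ cancel, whence $\bigl(1+T_0^{\CD}(\lambda)\bigr)\bigl(1+T_0^{\CD}(\lambda)\bigr)^{*}=1$; being a $2\times2$ matrix, $1+T_0^{\CD}(\lambda)$ is unitary.

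Continuity of $\lambda\mapsto T_0^{\CD}(\lambda)$ on $\Sigma$ is immediate from \eqref{defT0}, since $B(\lambda)$ and $M(\lambda+\ii0)$ are continuous there and $P(\lambda)$ is invertible. As $\lambda\to\pm\infty$ one has $B(\lambda)\to\tfrac{1}{\sqrt2}$ and $M(\lambda+\ii0)\to\tfrac{\ii}{2}$ (scalar multiples of the identity) from either side, and $\tfrac{\ii}{2}D-C$ is invertible; hence $T_0^{\CD}(\lambda)$ has a limit at $\pm\infty$ and $T_0^{\CD}(+\infty)=T_0^{\CD}(-\infty)$ because the limiting data coincide. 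The genuinely delicate point, which I expect to be the main obstacle, is the behaviour as $\lambda\to\pm m$: there $B(\lambda)$ degenerates, one diagonal entry tending to $0$ and the other to $\infty$ while their product stays equal to $\tfrac12$. I would treat it by rescaling the rows and columns of $DM(\lambda+\ii0)-C$ by the appropriate powers of the vanishing entry of $B(\lambda)$, chosen so that the rescaled matrix converges while the compensating factors are absorbed into the two outer copies of $B(\lambda)$ (which then combine into a single scalar multiple of the identity); reading off $\lim_{\lambda\to\pm m}T_0^{\CD}(\lambda)$ from this requires distinguishing the generic case, where the limiting rescaled matrix is invertible, from a few degenerate configurations (such as $D=0$ or $C=0$), which are disposed of directly. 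The a priori bound $\|1+T_0^{\CD}(\lambda)\|=1$ already precludes any blow-up, so only the identification of the finite limit --- and the unitarity of $1+T_0^{\CD}(\pm m)$ --- remains.
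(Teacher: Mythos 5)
Your proposal is correct in substance but follows a genuinely different route from the paper. The paper splits into three cases according to $\ker(D)$: for $\det(D)\neq 0$ it passes to $\Lambda=D^{-1}C$ hermitian and writes $T^{\CD}_\varepsilon(\lambda)=-2iB(\lambda)(M(\lambda+i\varepsilon)-\Lambda)^{-1}B(\lambda)$; for $\dim\ker(D)=1$ it reduces to a scalar Weyl function $m(z)=PM(z)I$ via the projection construction of \cite[Sec.~3]{PR}; and $D=0$ is trivial. In each case unitarity is verified by a separate computation of $(1+T_0^{\CD})(1+T_0^{\CD})^*-1$. You instead normalize once and for all to the unitary parametrization $C=\frac12(1-U)$, $D=\frac{i}{2}(1+U)$ (legitimate, since $(DM-C)^{-1}D$ is invariant under $(C,D)\mapsto(KC,KD)$), which gives a clean norm argument for the invertibility of $DM(\lambda+i0)-C$ on $\Sigma$ with no case distinction, and your unitarity proof via the single identity $PD^*-DP^*=D(M-M^*)D^*=2iDB(\lambda)^2D^*$ (using only $CD^*=DC^*$ and $M(\lambda+i0)^*=-M(\lambda+i0)$) handles all admissible pairs uniformly --- I checked that $T+T^*+TT^*$ does cancel as claimed. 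This buys a shorter and more symmetric argument; what the paper's case analysis buys is that the same decomposition is reused verbatim in the eigenvalue count of Proposition \ref{propspectral}.

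The one soft spot is the existence of the limits at the finite boundary points $\pm m$, which you correctly single out as the delicate step but only sketch. To be fair, the paper is no more explicit there (``the existence of the limits can be checked directly''). Your rescaling plan does work in the generic case: with $s=\sqrt[4]{(\lambda-m)/(\lambda+m)}$ and $E=\diag(s,1/s)$ one gets $T_0^{\CD}(\lambda)=-i\bigl[s(\tfrac{i}{2}DE-CE^{-1})\bigr]^{-1}(sDE)$, and both bracketed matrices converge as $s\to0$. But you can close the degenerate cases without any further case analysis: every entry of $T_0^{\CD}(\lambda)$ is a rational function of $s$, and your a priori bound $\|T_0^{\CD}(\lambda)\|\le 2$ forces any singularity at $s=0$ to be removable, so the limit exists unconditionally. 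Adding that one sentence would make your proof complete and, at that point, arguably tighter than the paper's.
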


\subsection{Spectral analysis}

In the next statement, we infer some spectral results for the operator $H^{\CD}$.

\begin{prop}\label{propspectral}
Let $(C,D)$ be an admissible pair. Then, one has $\spec_{ac}(H^{\CD})=(-\infty,-m]\cup[m,+\infty)$
and $\spec_{sc}(H^{\CD})=\emptyset$. Moreover, $\spec_{p}(H^{\CD})\subset (-m,m)$,
and the number of eigenvalues can be explicitly described as follows:
\begin{enumerate}
\item[(a)] If $\det (D)\ne 0$ and
$D^{-1}C:=\bpm
\ell_{11} & \ell_{12}\\
\overline \ell_{12} & \ell_{22}
\epm$,
then
\[
\#\spec_p (H^{\CD})=\begin{cases}
2 & \text{if  $\ell_{11}<0$ and $\ell_{22}>0$},\\
0 & \text{if  $\ell_{11}\ge 0$ and $\ell_{22}\le0$},\\
1 & \text{otherwise}.
\end{cases}
\]
\item[(b)] If $\dim[\ker (D)]= 1$ and $(p_1,p_2)$ is a unit vector spanning $\ker (D)$, then $\#\spec_p (H^{\CD})=1$ if
$p_1p_2\neq 0$, if $p_2 =0$ and $\tr(CD^*)>0$, or if $p_1=0$ and $\tr(CD^*)<0$. In the other cases, $\#\spec_p (H^{\CD})=0$.
\item[(c)] If $D=0$, then $\spec_p (H^{\CD})=\emptyset$.
\end{enumerate}
\end{prop}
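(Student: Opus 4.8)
\medskip
\noindent\textbf{Proof strategy.} The assertion has three essentially independent parts: the description of the continuous spectrum (including $\spec_{sc}(H^{\CD})=\emptyset$), the localisation $\spec_p(H^{\CD})\subset(-m,m)$, and the eigenvalue count (a)--(c).

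For the first two parts I would argue as follows. By \eqref{eq-krein}, $(H_0-z)^{-1}-(H^{\CD}-z)^{-1}$ has rank at most $2$, so the Kato--Rosenblum theorem gives that the absolutely continuous parts of $H_0$ and $H^{\CD}$ are unitarily equivalent; since $H_0$ is purely absolutely continuous with $\spec(H_0)=(-\infty,-m]\cup[m,+\infty)$, this yields $\spec_{ac}(H^{\CD})\supseteq(-\infty,-m]\cup[m,+\infty)$, with equality once the spectrum in the gap $(-m,m)$ is shown to reduce to finitely many eigenvalues. To exclude singular continuous spectrum I would establish a limiting absorption principle on $\Sigma$: the vectors $h^1_z,h^2_z$ defining $\gamma(z)$ depend continuously on $k$ down to the real axis, so $\gamma(z)$ has boundary values $\gamma(\lambda\pm\ii0)$, $\lambda\in\Sigma$, into a weighted space $L^2_s(\R;\C^2)$, while $\big(DM(\lambda+\ii\varepsilon)-C\big)^{-1}D$ has boundary values by Lemma~\ref{lem-unt}; inserting these into \eqref{eq-krein} produces boundary values of $(H^{\CD}-z)^{-1}$ in the weighted topology over $\Sigma$, hence purely absolutely continuous spectrum there. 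Combined with the discreteness of $\spec(H^{\CD})$ in $(-m,m)$ (Weyl's theorem, the resolvent difference being compact) and the fact that $\{\pm m\}$ cannot support a non-atomic measure, this gives $\spec_{sc}(H^{\CD})=\emptyset$. Finally $H^{\CD}$ has no eigenvalue in $\Sigma$, since for $\lambda\in\Sigma$ one has $k(\lambda+\ii0)\in\R\setminus\{0\}$ and every element of $\ker(H^*-\lambda)$ is, on each half-line, a combination of the oscillating exponentials $\e^{\pm\ii kx}$, hence not square-integrable; and $\pm m$ is not an eigenvalue because there $k=0$ and the local solutions of $(H_0\mp m)f=0$ are affine. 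Hence $\spec_p(H^{\CD})\subset(-m,m)$.

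For the count I use the criterion stated before Lemma~\ref{lem-unt}: for $\lambda\in(-m,m)$ one has $\lambda\in\spec_p(H^{\CD})$ iff $\det\big(DM(\lambda+\ii0)-C\big)=0$, with multiplicity $\dim\ker\big(DM(\lambda+\ii0)-C\big)$. Putting $a=a(\lambda):=\sqrt{(m-\lambda)/(m+\lambda)}$, the map $\lambda\mapsto a(\lambda)$ is a strictly decreasing bijection of $(-m,m)$ onto $(0,+\infty)$ and $M(\lambda+\ii0)=\tfrac12\diag(-a,a^{-1})$, so it suffices to count the $a\in(0,+\infty)$ solving $\det\big(\tfrac12 D\diag(-a,a^{-1})-C\big)=0$; after multiplying by $2a$ this is a polynomial equation of degree at most $2$ in $a$. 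Case (c) is immediate: $D=0$ forces $\det C\ne0$ by \eqref{eq-mcd}(ii) (equivalently $H^{\CD}=H^{10}=H_0$), and $\det(-C)\ne0$ leaves no eigenvalue. In case (a), $\det D\ne0$ and, $CD^*$ being self-adjoint, $D^{-1}C=D^{-1}(CD^*)(D^{-1})^*$ is self-adjoint, of the displayed form with $\ell_{11},\ell_{22}\in\R$; the equation becomes $\phi(a)=0$ with
\[
\phi(a):=\det\big(D^{-1}C-M(\lambda+\ii0)\big)=\ell_{11}\ell_{22}-|\ell_{12}|^2-\tfrac14+\tfrac12\ell_{22}\,a-\tfrac{1}{2a}\,\ell_{11}.
\]
Since $\phi'(a)=\tfrac{\ell_{11}}{2a^2}+\tfrac{\ell_{22}}{2}$ is monotone on $(0,+\infty)$, $\phi$ has at most one critical point there, existing precisely when $\ell_{11},\ell_{22}$ have strictly opposite signs; I would then read off the number of zeros of $\phi$ in $(0,+\infty)$ from the limits $\phi(0^+),\phi(+\infty)$ (each equal to $+\infty$, $-\infty$ or $-\tfrac14-|\ell_{12}|^2$ according to the sign of $\ell_{11}$, resp.\ of $\ell_{22}$) together with the critical value $\phi(a_*)=-\big(\sqrt{|\ell_{11}\ell_{22}|}-\tfrac12\sgn(-\ell_{11})\big)^2-|\ell_{12}|^2\le0$, obtained by substituting $a_*=\sqrt{-\ell_{11}/\ell_{22}}$. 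This gives two zeros when $\ell_{11}<0<\ell_{22}$ (two simple eigenvalues, or a single eigenvalue of multiplicity $2$ in the degenerate subcase $\ell_{12}=0$, $\ell_{11}\ell_{22}=-\tfrac14$, where $DM(\lambda_*+\ii0)=C$), no zero when $\ell_{11}\ge0$ and $\ell_{22}\le0$, and exactly one zero otherwise; counting with multiplicity yields (a).

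Finally, case (b): $\dim\ker D=1$, say $\ker D=\C\binom{p_1}{p_2}$, so that the rows of $D$ are multiples of $(p_2,-p_1)$, i.e.\ $D=\big(\bsm\mu p_2 & -\mu p_1\\ \nu p_2 & -\nu p_1\esm\big)$ with $(\mu,\nu)\ne(0,0)$, and a direct expansion using $\det D=0$ gives $\det\big(DM(\lambda+\ii0)-C\big)=\det C+\tfrac a2\,p_2(\mu c_{22}-\nu c_{12})-\tfrac1{2a}\,p_1(\mu c_{21}-\nu c_{11})$, a quadratic in $a$ after clearing denominators. When $\det C\ne0$ I would instead rewrite the condition as $1\in\spec\big(M(\lambda+\ii0)C^{-1}D\big)$; since this matrix has rank $1$, that means $\tr\big(M(\lambda+\ii0)C^{-1}D\big)=1$, i.e.\ $e_{11}a^2+2a-e_{22}=0$ where $C^{-1}D=(e_{ij})$. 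Now $C^{-1}D$ is self-adjoint ($CD^*=DC^*$ gives $C^{-1}D=D^*(C^*)^{-1}=(C^{-1}D)^*$) and of rank $1$ with $\ker(C^{-1}D)=\ker D$, hence $C^{-1}D=\sigma\,ww^*$ with $w=\binom{\bar p_2}{-\bar p_1}$ and $\sigma=\tr(C^{-1}D)$, so $e_{11}=\sigma|p_2|^2$ and $e_{22}=\sigma|p_1|^2$; moreover $CD^*=\sigma\,(Cw)(Cw)^*$ shows $\sgn\sigma=\sgn\tr(CD^*)$. Counting the roots in $(0,+\infty)$ of the real quadratic $e_{11}a^2+2a-e_{22}=0$ (whose coefficients obey $e_{11}e_{22}\ge0$), according to which of $p_1,p_2$ vanish and to $\sgn\tr(CD^*)$, reproduces exactly the three possibilities of (b); the boundary case $\det C=0$ (for which \eqref{eq-mcd}(ii) forces $\operatorname{rank}C=1$ with $\ran C\oplus\ran D=\C^2$) is dealt with by the same computation applied to the displayed quadratic. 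The principal obstacle is precisely this bookkeeping in case (b): transcribing the coefficients of the quadratic into $p_1,p_2$ and $\tr(CD^*)$, and using $\det(CC^*+DD^*)\ne0$ together with $CD^*=DC^*$ to discard the degenerate configurations and to fix the sign deciding solvability in $(0,+\infty)$. In case (a) the only delicate point is the sign of $\phi(a_*)$, which is what forces ``two'' in the first subcase; the continuous-spectrum part is entirely standard.
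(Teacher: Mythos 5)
Your treatment of the continuous spectrum, the localisation $\spec_p(H^{\CD})\subset(-m,m)$, and cases (a) and (c) is correct and close in spirit to the paper's: the paper gets the essential spectrum from Weyl's theorem and excludes singular spectrum on $\Sigma$ by showing that $\lim_{\varepsilon\searrow0}\Im\langle f,(H^{\CD}-\lambda-i\varepsilon)^{-1}f\rangle$ stays finite there (using \eqref{eq-krein} and the existence of $(DM(\lambda+i0)-C)^{-1}$), then rules out $\pm m$ as eigenvalues by the same ODE observation you make; your Kato--Rosenblum shortcut for the absolutely continuous part is a legitimate and slightly cleaner alternative. In case (a) your function $\phi(a)$ is exactly the paper's $D_\Lambda(t)$, and your identification of the degenerate subcase $\ell_{12}=0$, $\ell_{11}\ell_{22}=-\tfrac14$ (where the matrix vanishes and the multiplicity is $2$) matches the paper's discussion of multiplicities at the end of its proof.

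Where you genuinely diverge is case (b). The paper reduces $DM(\lambda+i0)-C$ to the scalar $PM(\lambda+i0)I-\ell$ with $\ell=(DI)^{-1}CI\in\R$ (following \cite{PR}), obtaining the single function $d_\ell(t)=|p_2|^2t-|p_1|^2/t+2\ell$ valid for \emph{every} admissible pair with $\dim\ker D=1$, and then only needs $\sgn\ell=\sgn\tr(CD^*)$. Your rank-one analysis of $C^{-1}D=\sigma ww^*$ is correct and recovers the same quadratic (with $\ell=1/\sigma$), but it is confined to $\det C\neq0$. For $\det C=0$ you fall back on the displayed expansion $\det(DM-C)=\det C+\tfrac a2 p_2(\mu c_{22}-\nu c_{12})-\tfrac1{2a}p_1(\mu c_{21}-\nu c_{11})$ and assert it is ``dealt with by the same computation''. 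That is the one genuine gap: with $\det C=0$ the equation becomes $a^2p_2(\mu c_{22}-\nu c_{12})=p_1(\mu c_{21}-\nu c_{11})$, and to count positive roots you must show that the two coefficients are simultaneously expressible through $|p_1|^2$, $|p_2|^2$ and $\sgn\tr(CD^*)$ exactly as in the invertible case --- in particular that their ratio is real and has the right sign. This does not follow from the expansion alone; it requires invoking $CD^*=DC^*$ and $\det(CC^*+DD^*)\neq0$ again, which is precisely the work the paper's projection reduction does once and for all. Either carry out that sign analysis explicitly, or adopt the $(I,P,\ell)$ reduction so that the $\det C=0$ subcase never appears.
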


\begin{proof}

Since the difference of the resolvents of $H^{\CD}$ and $H_0$ is a finite rank operator,
it clearly follows that $\spec_{ess}(H^{\CD})=\spec_{ess}(H_0)=\spec(H_0) = (-\infty,-m]\cup[m,+\infty)$.

Let $f\in \H$ and let $\mu^{\CD}_f$ be the spectral measure associated with $H^{\CD}$ and $f$.
It is well known (see for example \cite[Thm.~4.15]{Jak}) that the singular part $\mu^{\CD}_{f,s}$ of $\mu^{\CD}_f$ is concentrated on the set
\begin{equation*}
\big\{
\lambda\in\R \mid \lim_{\varepsilon \searrow 0}\Im \big\langle f, (H^{\CD}-\lambda-i\varepsilon)^{-1}f\big\rangle=\infty
\big\}.
\end{equation*}
So, let us consider $f\in C_c^\infty(\R;\C^2)$ and compute the above quantity for any $|\lambda|>m$. By \eqref{eq-krein} and for $z=\lambda +i\varepsilon$, this reduces to an evaluation of the expressions
\begin{equation}\label{2_termes}
\lim_{\varepsilon \searrow 0}\Im \big\langle f, (H_0-z)^{-1}f\big\rangle \quad \hbox{ and } \quad \lim_{\varepsilon \searrow 0}\Im \big\langle \gamma(z)^*f, \big(DM(z)-C\big)^{-1}D\gamma(\overline z)^*f\big\rangle_{\C^2}.
\end{equation}
A simple computation which takes the expression \eqref{Green_f} for the Green function into account shows that the first term in \eqref{2_termes} is finite. Similarly, by using the explicit expression for $\gamma(z)$, one easily obtains that the limits $\lim_{\varepsilon\searrow 0} \gamma(\lambda\pm i\varepsilon)^*f\in \C^2$ exist.
On the other hand, still for $|\lambda|> m$ one has $\Im M(\lambda+i 0)>0$, hence, by \cite[Lem.~6]{PR}, the limit $\big(D M(\lambda+i0)-C\big)^{-1}\in M_2(\C)$ exists.
Therefore, the second expression in \eqref{2_termes} is also finite, and thus the support of $\mu^{\CD}_{f,s}$ does not intersect
the set $(-\infty,-m)\cup(m,+\infty)$. Since $f$ is an arbitrary element of a dense set in $\H$, it means that $H^{\CD}$ has no singular spectrum in $(-\infty,-m)\cup(m,+\infty)$, and in particular that the singular continuous spectrum is empty.

Now, to see that $\pm m\notin \spec_p (H^{\CD})$
it is sufficient to observe that the only solutions of the ordinary differential equation
$H f = \pm m f$ with $H$ given by the expression~\eqref{eq-dirac} are either constant or polynomially growing. In particular, this equation has no solution in $\H$.

It remains to count the eigenvalues of $H^{\CD}$ in the interval $(-m,m)$.
For that purpose and as noted in the previous section, we first need to determine if the operator $DM(\lambda+i0)-C$ has a $0$-eigenvalue for some $\lambda \in (-m,m)$.
To simplify the notation, let us set
$t=\sqrt{\frac{m-\lambda}{m+\lambda}}$.
Then, this problem reduces in the study of the possible $0$-eigenvalue of the operator $\frac{1}{2}D \;\!\diag( -t,1/t) -C$ for $t \in (0,\infty)$.

a) We first consider the case $\det(D)\ne 0$. For that purpose, we set $\Lambda=D^{-1}C$ and study the equivalent admissible pair $(\Lambda,1)$. Now, observe that the matrix $\Lambda$ is hermitian:
\[
\Lambda=:\bpm
\ell_{11} & \ell_{12}\\
\overline \ell_{12} & \ell_{22}
\epm.
\]
Thus, we are left with the study of the determinant of the operator $\frac{1}{2}\;\!\diag(-t,1/t)-L$ with $t \in (0,\infty)$.
Let us still define the map
\[
(0,\infty)\ni t \mapsto D_\Lambda(t):={\textstyle \frac{1}{2}\ell_{22} t -\frac{1}{2}\ell_{11}/t + \det(\Lambda) -\frac{1}{4} \in \R}
\]
Clearly, the determinant of the mentioned operator vanishes for some $t \in (0,\infty)$ if and only if the equation $D_\Lambda(t)=0$. Now, if $\ell_{11}\ell_{22}>0$, then the map $D_\Lambda(\cdot)$
has no local extremum and thus the above equation has always one single solution. On the other hand, if $\ell_{11} \ell_{22}<0$, then this equation may have $0$, $1$ or $2$ solutions. Indeed, in that case the map $D_\Lambda(\cdot)$ takes its local extremum at the value $t = \sqrt{-\ell_{11}/\ell_{22}}$. Then, $D_\Lambda(\cdot)$ vanishes only once if $\sgn(\ell_{11})\;\!\big[\det(L)-\frac{1}{4}\big] =  \sqrt{-\ell_{11}\ell_{22}}$, $D_\Lambda(\cdot)$ vanishes twice on $(0,\infty)$ if
$\sgn(\ell_{11})[\det(L)-\frac{1}{4}]>\sqrt{-\ell_{11}\ell_{22}}$, while $D_\Lambda(\cdot)$ does not vanish in the remaining case.
However, note that there is a very explicit set of solutions of the relation $\sgn(\ell_{11})\;\!\big[\det(L)-\frac{1}{4}\big] =  \sqrt{-\ell_{11}\ell_{22}}$, namely when $\ell_{11}<0$, $\ell_{11}\ell_{22}=-\frac{1}{4}$ and $\ell_{12}=0$.
In addition, a simple computation shows that the two conditions $\ell_{11} \ell_{22}<0$ and
$\sgn(\ell_{11})[\det(L)-\frac{1}{4}]>\sqrt{-\ell_{11}\ell_{22}}$ hold whenever $\ell_{11}<0$, $\ell_{22}>0$, and either $\ell_{11}\ell_{22}\neq -\frac{1}{4}$ or $\ell_{12}\neq 0$, while the two conditions $\ell_{11} \ell_{22}<0$ and
$\sgn(\ell_{11})[\det(L)-\frac{1}{4}]<\sqrt{-\ell_{11}\ell_{22}}$
hold whenever $\ell_{11}>0$ and $\ell_{22}<0$.
Now, if $\ell_{11}=0$ then $D_\Lambda(\cdot)$ vanishes once on $(0,\infty)$ if $\ell_{22} >0$ and does not vanish if $\ell_{22}<0$; if $\ell_{22}=0$ and $\ell_{11} < 0$ then $D_\Lambda(\cdot)$ vanishes once while if $\ell_{11} > 0$ then $D_\Lambda(\cdot)$ does not vanish on $(0,\infty)$.
Finally, if $\ell_{11}=\ell_{22}=0$ there is no solution for $D_\Lambda(\cdot)=0$.

b) We now consider the case $\dim[\ker (D)]=1$ and follow the construction described in \cite[Sec.~3]{PR}. Let $(p_1,p_2)$ be a unit vector spanning $\ker (D)$.
Let $I:\C \to \C^2$ be the identification of $\C$ with $\ker(D)^\bot$, and let $P$ denote its adjoint, {\it i.e.}~$P:\C^2 \to \C$ is the composition of the orthogonal projection onto $\ker(D)^\bot$ together with the identification of $I \C$ with $\C$.
Then, as shown in \cite[Eq. (12)]{PR}, the operator $DM(\lambda+i0)-C$ is invertible if and only if the reduced operator $PM(\lambda+i0)I-\ell$ is invertible, where $\ell := (DI)^{-1}CI \in \R$.
By using this observation and the change of variable $t=\sqrt{\frac{m-\lambda}{m+\lambda}}$
we see that the $0$-eigenvalue of the operator $\frac{1}{2}D \;\!\diag( -t,1/t) -C$ for $t \in (0,\infty)$
coincides with the $0$ of the map
\[
(0,+\infty)\ni t\mapsto d_\ell(t):=|p_2|^2 t - |p_1|^2 / t +2\ell \in \R.
\]
By elementary considerations we observe that this map vanishes once if $p_1p_2\neq 0$, if $p_2 =0$ and $\ell>0$, or if $p_1=0$ and $\ell<0$. In the other cases the map $d_\ell(\cdot)$ never vanishes.
Finally, in order to obtain the statement of the lemma, let us recall a useful relation between $\ell$ and $(C,D)$, namely $\sgn (\ell)= \sgn[\tr (CD^*)]$, see \cite[Sec.~3.B]{KPR}.

c) If $D=0$, then $H^{\CD}=H_0$ and hence $H^{\CD}$ has no eigenvalue.

It now only remains to relate these various results with the dimension of $\H_p(H^{\CD})$. As mentioned before the statement of the lemma, for each $\lambda\in (-m,m)$ such that $0$ is an eigenvalue of $DM(\lambda+i0)-C$, one needs to determine the multiplicity of this eigenvalue. Equivalently, for each $t\in (0,\infty)$ such that $0$ is an eigenvalue of $\frac{1}{2}D \;\!\diag( -t,1/t) -C$, one needs to determine the multiplicity of this eigenvalue. An simple inspection in the previous computations shows that the multiplicity of the $0$-eigenvalue is always $1$, except in one special case already emphasized above, namely when $\ell_{11}<0$, $\ell_{11}\ell_{22}=-\frac{1}{4}$ and $\ell_{12}=0$, for which the multiplicity if $2$. By collecting all these results, one finally obtains the statement of the lemma.
\end{proof}

\section{Scattering theory}\label{secscat}
\setcounter{equation}{0}

\subsection{Wave operators and scattering operator}

In this section, we describe the scattering theory for the pair of the operators $(H^\CD,H_0)$.
Since $\big(H^{\CD}-i\big)^{-1} - (H_0-i)^{-1}$ is a finite dimensional operator, it is well-known that the time dependent wave operators
\begin{equation}\label{defW}
W_\pm\big(H^{\CD},H_0\big):=s-\lim_{t \to \pm \infty} \e^{itH^{\CD}}\e^{-itH_0}
\end{equation}
exist and are complete, see for example \cite[Thm.~6.5.1]{Y}.
Then, the operator
\begin{equation}\label{defS}
S\big(H^{\CD},H_0\big):=W_+\big(H^{\CD},H_0\big)^*W_-\big(H^{\CD},H_0\big)
\end{equation}
is usually referred to as the scattering operator. Thus, the aim of the present section
is to calculate these objects in terms of the Weyl function
and of the parameters $(C,D)$. To do that, we recall
the so-called stationary expressions for the wave operators.
Namely, for suitable $f,g \in \H$ we consider the operators $W_\pm^{\CD}$
defined by
\begin{equation*}
\langle W_\pm^{\CD}f,g\rangle = \int_\R \lim_{\varepsilon \searrow 0}\frac{\varepsilon}{\pi}\big\langle (H_0-\lambda\mp i\varepsilon)^{-1}f,(H^{\CD}-\lambda\mp i\varepsilon)^{-1}g\big\rangle \;\!\d \lambda.
\end{equation*}
Note that the precise choice for the elements $f,g$ will be specified later on, and that the equality of $W_\pm\big(H^{\CD},H_0\big)$ with $W_\pm^{\CD}$ will follow from our computations.
In the sequel, we concentrate on $W_-^\CD$, and stress that the operator $W_+^\CD$ can be treated similarly.

For that purpose, consider for $\varepsilon>0$ the function $\delta_\varepsilon:\R\to\R$ given by
\[
\delta_\varepsilon (x)=\dfrac{1}{\pi}\dfrac{\varepsilon}{x^2+\varepsilon^2}.
\]
We clearly have
\[
\delta_\varepsilon\big(H_0-\lambda\big)
=\frac{\varepsilon}{\pi}(H_0-\lambda+i \varepsilon)^{-1}(H_0-\lambda-i \varepsilon)^{-1}.
\]
With this notation, the limit
$
\lim_{\varepsilon\searrow 0}
\big\langle\delta_\varepsilon\big(H_0-\lambda\big)f,g\big\rangle
$
exists for a.e. $\lambda\in\R$, and
\begin{equation*}
\int_\R\lim_{\varepsilon\searrow0}
\big\langle\delta_\varepsilon(H_0-\lambda)f,g\big\rangle \;\!\d\lambda
= \langle f,g\rangle, 
\end{equation*}
see \cite[Sec.~1.4]{Y}. As a consequence, by taking the resolvent formula \eqref{eq-krein} into account, one obtains that
\begin{multline}
       \label{ww1}
 \langle (W_-^{\CD}-1)f,g\rangle \\
=  -\int_\R\lim_{\varepsilon \searrow 0}
\Big\langle \frac{\varepsilon}{\pi}\gamma(\lambda -i\varepsilon)^* (H_0-\lambda +i\varepsilon)^{-1} f,  \big(DM(\lambda-i\varepsilon)-C\big)^{-1} D
\gamma(\lambda +i\varepsilon)^* g\Big\rangle_{\C^2}
\d \lambda.
\end{multline}
Our next aim will be to put together some of these terms and to obtain some more coherent and simply understandable factors. For that purpose, the spectral representation of the operator $H_0$ will be needed.

\subsection{Spectral representation of the free Dirac operator}

In this section, we construct the spectral representation of $H_0$, mimicking the construction provided in \cite[Sec.~2]{Ito}.
For any fixed $p \in \R$, let us set
\[
h(p) := \bpm m & -ip \\ ip & -m\epm \ \in M_2(\C).
\]
The eigenvalues of this matrix are $\pm \sqrt{p^2+m^2}$ and two normalized eigenfunction are defined by the expressions
\begin{equation}
         \label{xixi}
\begin{gathered}
\xi^+(p):=
\dfrac{1}{\sqrt{2(p^2+m^2+m\sqrt{p^2+m^2})}}\,
\bpm
m+\sqrt{p^2+m^2} \\ i p
\epm,\\
\xi^-(p):=
\dfrac{1}{\sqrt{2(p^2+m^2+m\sqrt{p^2+m^2})}}\,
\bpm
i p \\ m+\sqrt{p^2+m^2}
\epm.
\end{gathered}
\end{equation}
For simplicity, the orthogonal projection on the subspace generated by $\xi^\pm(p)$ will be denoted by $P^\pm(p)\in M_2(\C)$.
Then, for any $\lambda \in \R$ satisfying $\pm \lambda >m$, let us define
\begin{equation*}
\Hrond(\lambda) := \Big( P^\pm(-\sqrt{\lambda^2-m^2}\big)\;\!\C^2, P^\pm\big(\sqrt{\lambda^2-m^2}\big)\;\!\C^2\Big),
\end{equation*}
and observe that $\Hrond(\lambda)$ is a two dimensional subspace of $ \C^2\oplus \C^2$.
Let us also set
\begin{equation*}
\Hrond:=\int_\Sigma^\oplus \Hrond(\lambda)\;\! \d \lambda.
\end{equation*}
More precisely, any element $\varphi \in \Hrond$ is of the form
$(\varphi_1,\varphi_2)$ with $\varphi_j \in L^2(\Sigma; \C^2)$,
$\varphi_1(\lambda)$ collinear to $\xi^\pm\big(-\sqrt{\lambda^2-m^2}\big)$ and
$\varphi_2(\lambda)$ collinear to $\xi^\pm\big(\sqrt{\lambda^2-m^2}\big)$ for $\pm \lambda>m$.
One also defines the unitary operator $\U : \H \to \Hrond$ given for $f \in \H$ and $\pm \lambda>m$ by
\begin{equation*}
[\U f](\lambda):= \sqrt[4]{\frac{\lambda^2}{\lambda^2-m^2}}
\Big( P^\pm(-\sqrt{\lambda^2-m^2}\big)\;\!f\big(-\sqrt{\lambda^2-m^2}\big), P^\pm\big(\sqrt{\lambda^2-m^2}\big)\;\!f\big(\sqrt{\lambda^2-m^2}\big)\Big).
\end{equation*}
Note that its adjoint is provided by the following expression: for any $\varphi \in \Hrond$ with $\varphi = (\varphi_1,\varphi_2)$ and for $p \in \R^*$ one has
\begin{equation*}
[\U^*\varphi](p) =\sqrt[4]{\frac{p^2}{p^2+m^2}}
\begin{cases}
\varphi_1\big(-\sqrt{p^2 + m^2}\big) + \varphi_1\big(\sqrt{p^2 + m^2}\big)& \ \ \text{ if } p<0,\\
\varphi_2\big(-\sqrt{p^2 + m^2}\big) + \varphi_2\big(\sqrt{p^2 + m^2}\big)& \ \ \text{ if } p>0.
\end{cases}
\end{equation*}
Obviously, the above expressions have to be understood in the $L^2$-sense, {\it i.e.}~for almost every $\lambda \in \Sigma$ or for almost every $p \in \R^*$.
It is now a matter of a simple computation to check that for any $\lambda\in \Sigma$ one has
\begin{equation*}
[\U \;\!h(\cdot)\;\!\U^* \varphi](\lambda) = \lambda \;\! \varphi(\lambda).
\end{equation*}
In addition, if $\F$ denotes the Fourier transform on $L^2(\R; \C^2)$, then the operator $\F_0:= \U \F: \H \to \Hrond$ realizes the spectral representation of $H_0$, namely
\begin{equation}\label{rel1}
\F_0\;\!H_0\;\!\F_0^* = L_0,
\end{equation}
where $L_0$ is the self-adjoint operator of multiplication by the variable $\lambda$ in $\Hrond$.

\subsection{Computing the wave operator : preliminary steps}

In order to simplify the expression~\eqref{ww1} we first use the identity \eqref{gamh}, which gives,
for $z\notin\spec(H_0)$,
\begin{align*}
\frac{\varepsilon}{\pi}
\gamma(\bar z)^*(H_0-\bar z)^{-1}
&= \frac{\varepsilon}{\pi}\,
\Gamma_2(H_0-z)^{-1}(H_0-\bar z)^{-1}
= \Gamma_2\;\! \delta_\varepsilon(H_0-\lambda),\\
\gamma(z)^* &= \Gamma_2\;\!(H_0-\bar z)^{-1}.
\end{align*}
By collecting these equalities
and using \eqref{tadj}, one infers that
\begin{align*}
\langle (W_-^{\CD}-1)f,g\rangle
& = -\int_\R\lim_{\varepsilon \searrow 0}
\big\langle  \Gamma_2 \delta_\varepsilon(H_0-\lambda)f,
\big(DM(\lambda-i\varepsilon)-C\big)^{-1} D
\Gamma_2 (H_0-\lambda+ i\varepsilon)^{-1} g\big\rangle_{\C^2}\;\!
\d \lambda \\
& = \frac{1}{2} \int_\Sigma \lim_{\varepsilon \searrow 0}
\big\langle  T_\varepsilon^{\CD}(\lambda) B(\lambda)^{-1} \Gamma_2 \delta_\varepsilon(H_0-\lambda)f,
i B(\lambda)^{-1}\Gamma_2 (H_0-\lambda+ i\varepsilon)^{-1} g\big\rangle_{\C^2}\;\!
\d \lambda.
\end{align*}
Note that for the second equality, we have taken into account that the above
integrant vanishes for almost every $\lambda \in (-m,m)$ as $\varepsilon \searrow 0$;
more precisely, it vanishes at any point $\lambda\in(-m,m)$ which is not an eigenvalue
of $H^\CD$.

By looking at the previous equality inside the spectral representation of $H_0$, one has thus obtained that for suitable $\varphi, \psi \in \Hrond$:
\begin{multline}
 \langle \F_0(W_-^{\CD}-1)\F_0^*\varphi, \psi\rangle_{_{\!\!\Hrond}} \\
=   \frac{1}{2}\int_\Sigma \lim_{\varepsilon \searrow 0}
\big\langle  T_\varepsilon^{\CD}(\lambda) B(\lambda)^{-1} \Gamma_2 \F_0^* \delta_\varepsilon(L_0-\lambda)\varphi,
i B(\lambda)^{-1}\Gamma_2 \F_0^*(L_0-\lambda+ i\varepsilon)^{-1} \psi \big\rangle_{\C^2}\;\!
\d \lambda.
   \label{ww2}
\end{multline}
For the next statement, one needs to be a little bit more cautious about the set of suitable elements of $\Hrond$.
For that purpose, we introduce the following space:
\begin{equation*}
\SS:= \Big\{\eta = \bpm \eta_1 \\ \eta_2\epm \mid  \eta_j \in C_c^\infty(\Sigma) \Big\}.
\end{equation*}
Our interest in this set comes from its dense embedding into $\Hrond$.
Indeed, define
\[
J: \SS \to \Hrond, \quad
[J\eta](\lambda):=\Big(\eta_1(\lambda)\;\!\xi^\pm\big(-\sqrt{\lambda^2-m^2}\big) ,
\eta_2(\lambda)\;\!\xi^\pm\big(\sqrt{\lambda^2-m^2}\big)\Big)
\text{ for } \pm \lambda >m.
\]
It clearly follows that $J\SS$ is dense in $\Hrond$ and that $J$ extends to a unitary operator from $L^2(\Sigma;\C^2)$ to $\Hrond$.
We then set
\begin{equation}\label{defL}
L:=J^*L_0 J,
\end{equation}
i.e. $L$ is simply the operator of multiplication by the variable in $L^2(\R,\C^2)$.

Let us finally introduce for each $\lambda \in \Sigma$ the unitary $2\times 2$ matrix $N(\lambda)$ defined by
\[
N(\lambda) =\frac{1}{\sqrt{2}}\bpm 1 & 1 \\ -i & i \epm \hbox{ if } \lambda < -m
\quad \hbox{ and }\quad
N(\lambda) =\frac{1}{\sqrt{2}}\bpm -i & i \\ 1 & 1 \epm \hbox{ if } \lambda > m.
\]
The operator of multiplication by the function $N$ defines a unitary operator in $L^2(\Sigma;\C^2)$,
and it will be denoted by the same symbol $N$.

\begin{remark}
Let us stress that the precise form of this unitary transformation $N$ is not really relevant here. Indeed, this transformation highly depends on our choice for the functions $\xi^\pm$ in the spectral representation of $H_0$. In fact, only the product $NJ^*\U$ really matters, as it can be inferred from formula \eqref{egalite} (recall that $\F_0 = \U\F$).
\end{remark}

\begin{lemma}\label{lem5}
For any $\eta \in \SS$ and $\lambda \in \Sigma$ one has
\begin{equation*}
\lim_{\varepsilon \searrow 0}  \sqrt{\pi} B(\lambda)^{-1} \Gamma_2 \F_0^* \delta_\varepsilon(L_0-\lambda)J \eta
= N(\lambda)\;\!\eta(\lambda).
\end{equation*}
\end{lemma}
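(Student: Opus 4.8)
The plan is to turn every operator in $\sqrt{\pi}\,B(\lambda)^{-1}\Gamma_2\F_0^{*}\delta_\varepsilon(L_0-\lambda)J\eta$ into an explicit formula and then pass to the limit. Write $\F_0^{*}=\F^{*}\U^{*}$ and set $g_\varepsilon:=\U^{*}\delta_\varepsilon(L_0-\lambda)J\eta\in\H$, so that $\F_0^{*}\delta_\varepsilon(L_0-\lambda)J\eta=\F^{*}g_\varepsilon$. First I would substitute the definitions of $J$, of $\delta_\varepsilon(L_0-\lambda)$ (multiplication in $\Hrond$ by $\mu\mapsto\frac1\pi\frac{\varepsilon}{(\mu-\lambda)^2+\varepsilon^2}$) and of $\U^{*}$; writing $\omega(p):=\sqrt{p^{2}+m^{2}}$, this gives for $p>0$
\[
g_\varepsilon(p)=\sqrt[4]{\tfrac{p^{2}}{p^{2}+m^{2}}}\,\Big(\delta_\varepsilon\big(-\omega(p)-\lambda\big)\,\eta_2\big(-\omega(p)\big)\,\xi^{-}(p)+\delta_\varepsilon\big(\omega(p)-\lambda\big)\,\eta_2\big(\omega(p)\big)\,\xi^{+}(p)\Big),
\]
and the same expression with $\eta_2$ replaced by $\eta_1$ for $p<0$. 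The crucial observation is that, because $\eta_1,\eta_2\in C_c^{\infty}(\Sigma)$ and $\omega(p)=m$ only for $p=0$, the function $g_\varepsilon$ is smooth, compactly supported, and vanishes in a neighbourhood of $p=0$; hence $\F^{*}g_\varepsilon$ is a Schwartz function. In particular it is continuous, so its one--sided limits at the origin coincide, and $\Gamma_2\F^{*}g_\varepsilon$ is simply the vector obtained from $[\F^{*}g_\varepsilon](0)=(2\pi)^{-1/2}\int_\R g_\varepsilon(p)\,\d p$ by interchanging its two components.

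The next step is to evaluate $\lim_{\varepsilon\searrow0}\int_\R g_\varepsilon(p)\,\d p$. For $\lambda>m$ the quantity $-\omega(p)-\lambda$ stays below $-2m$ on the (compact) support of the integrand, so the term carrying $\delta_\varepsilon(-\omega(p)-\lambda)$ converges uniformly to $0$; symmetrically, for $\lambda<-m$ the term with $\delta_\varepsilon(\omega(p)-\lambda)$ drops out. In the surviving term I would perform the change of variables $\mu=\pm\omega(p)$ separately on $\{p>0\}$ and on $\{p<0\}$. The Jacobian produces a factor $\tfrac{|\mu|}{\sqrt{\mu^{2}-m^{2}}}$ which, together with $\sqrt[4]{\frac{p^{2}}{p^{2}+m^{2}}}=\sqrt[4]{\frac{\mu^{2}-m^{2}}{\mu^{2}}}$, collapses to $\sqrt[4]{\frac{\mu^{2}}{\mu^{2}-m^{2}}}$, i.e.\ exactly the weight appearing in the definition of $\U$. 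Since $\delta_\varepsilon(\,\cdot-\lambda)$ is an approximate identity and the remaining integrand is continuous and compactly supported inside $(m,\infty)$ (resp.\ $(-\infty,-m)$), away from the singularity at the threshold because $\eta_j$ vanishes there, one obtains, for $\lambda>m$,
\[
\lim_{\varepsilon\searrow0}\int_\R g_\varepsilon(p)\,\d p=\sqrt[4]{\tfrac{\lambda^{2}}{\lambda^{2}-m^{2}}}\,\Big(\eta_1(\lambda)\,\xi^{+}\big(-\sqrt{\lambda^{2}-m^{2}}\,\big)+\eta_2(\lambda)\,\xi^{+}\big(\sqrt{\lambda^{2}-m^{2}}\,\big)\Big),
\]
and the analogous formula with $\xi^{-}$ in place of $\xi^{+}$ when $\lambda<-m$.

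Finally I would insert the explicit eigenvectors. Since $p^{2}+m^{2}=\lambda^{2}$ when $p=\pm\sqrt{\lambda^{2}-m^{2}}$, the formulae \eqref{xixi} give $\xi^{+}\big(\pm\sqrt{\lambda^{2}-m^{2}}\,\big)=\tfrac{1}{\sqrt{2|\lambda|}}\big(\bsm\sqrt{|\lambda|+m}\\ \pm i\sqrt{|\lambda|-m}\esm\big)$ for $\lambda>m$ and $\xi^{-}\big(\pm\sqrt{\lambda^{2}-m^{2}}\,\big)=\tfrac{1}{\sqrt{2|\lambda|}}\big(\bsm \pm i\sqrt{|\lambda|-m}\\ \sqrt{|\lambda|+m}\esm\big)$ for $\lambda<-m$. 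Plugging these in, interchanging the two components to form $\Gamma_2$, and multiplying by $\sqrt{\pi}$ and by $B(\lambda)^{-1}=\sqrt{2}\,\diag\!\big(\sqrt[4]{\tfrac{\lambda+m}{\lambda-m}},\,\sqrt[4]{\tfrac{\lambda-m}{\lambda+m}}\big)$, all the fourth--root and normalization constants cancel, and one is left with $\tfrac{1}{\sqrt{2}}\big(\bsm i(\eta_2(\lambda)-\eta_1(\lambda))\\ \eta_1(\lambda)+\eta_2(\lambda)\esm\big)$ when $\lambda>m$ and $\tfrac{1}{\sqrt{2}}\big(\bsm\eta_1(\lambda)+\eta_2(\lambda)\\ i(\eta_2(\lambda)-\eta_1(\lambda))\esm\big)$ when $\lambda<-m$; these are precisely $N(\lambda)\eta(\lambda)$ in the two cases, which completes the proof. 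The only genuinely delicate point is the legitimacy of evaluating $\Gamma_2$ (a boundary value at $0$) and interchanging it with the limit $\varepsilon\searrow0$; this is exactly where the hypothesis $\eta_j\in C_c^{\infty}(\Sigma)$ is used, since it turns $g_\varepsilon$ into a genuine test function, so that $\F^{*}g_\varepsilon$ is everywhere smooth and the whole argument reduces to elementary changes of variables in absolutely convergent integrals.
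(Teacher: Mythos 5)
Your proof is correct and follows essentially the same route as the paper's: unfold $J$, $\delta_\varepsilon(L_0-\lambda)$, $\U^*$ and $\Gamma_2\F^*$ into explicit integrals, change variables $\mu=\pm\sqrt{p^2+m^2}$ so that the Jacobian combines with the weight $\sqrt[4]{p^2/(p^2+m^2)}$ into $\sqrt[4]{\mu^2/(\mu^2-m^2)}$, let the approximate identity $\delta_\varepsilon(\cdot-\lambda)$ concentrate at $\mu=\lambda$, and insert the explicit eigenvectors $\xi^\pm$. The only difference is that you spell out why $\F^*g_\varepsilon$ is smooth at the origin (so that $\Gamma_2$ can be evaluated and interchanged with the limit $\varepsilon\searrow 0$), a point the paper leaves implicit.
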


\begin{proof}
For shortness, let us set $p_m := \sqrt{p^2+m^2}$. Then, a simple computation
gives
\begin{align*}
& \Gamma_2 \F_0^* \delta_\varepsilon(L_0-\lambda)J \eta \\
= & \frac{1}{\sqrt{2\pi}}
\int_{-\infty}^0 \sqrt[4]{\frac{p^2}{p^2+m^2}}
\Big[\delta_\varepsilon(-p_m-\lambda) \eta_1(-p_m) \xi^+(p) +
\delta_\varepsilon(p_m-\lambda) \eta_1(p_m) \xi^-(p)\Big] \d p \\
& + \frac{1}{\sqrt{2\pi}}
\int^{\infty}_0 \sqrt[4]{\frac{p^2}{p^2+m^2}}
\Big[\,\delta_\varepsilon(-p_m-\lambda) \eta_2(-p_m) \xi^+(p) +
\delta_\varepsilon(p_m-\lambda)\, \eta_2(p_m) \xi^-(p)\Big] \d p \\
= & \frac{1}{\sqrt{2\pi}} \int_{-\infty}^{-m}
\sqrt[4]{\frac{\mu^2}{\mu^2-m^2}}\, \delta_\varepsilon(\mu-\lambda)\, \Big[\eta_1(\mu) \xi^+\big(-\sqrt{\mu^2 - m^2}\big) +\eta_2(\mu) \xi^+\big(\sqrt{\mu^2 - m^2}\big)\Big] \d \mu
\\
& + \frac{1}{\sqrt{2\pi}} \int^{\infty}_{m}
\sqrt[4]{\frac{\mu^2}{\mu^2-m^2}}\, \delta_\varepsilon(\mu-\lambda) \Big[ \eta_1(\mu) \xi^-\big(-\sqrt{\mu^2 - m^2}\big)+ \eta_2(\mu)  \xi^-\big(\sqrt{\mu^2 - m^2}\big) \Big] \d \mu.
\end{align*}
By taking the limit as $\varepsilon \searrow 0$ one obtains for $\pm \lambda >m$
\begin{equation*}
\lim_{\varepsilon \searrow 0 } \Gamma_2 \F_0^* \delta_\varepsilon(L_0-\lambda)J\eta
=\frac{1}{\sqrt{2\pi}}
\sqrt[4]{\frac{\lambda^2}{\lambda^2-m^2}}
\Big[ \eta_1(\lambda) \xi^{\mp}\big(-\sqrt{\lambda^2 - m^2}\big)+ \eta_2(\lambda) \xi^{\mp}\big(\sqrt{\lambda^2 - m^2}\big)\Big].
\end{equation*}
Finally, by using the expressions \eqref{xixi} for $\xi^\pm$ and by considering separately the case $\pm \lambda>m$,
a short computation leads directly to the statement.
\end{proof}

In addition, one can also show:

\begin{lemma}\label{lem6}
For any $\eta \in \SS$ and $\lambda \in \Sigma$ one has
$$
B(\lambda)^{-1}\Gamma_2 \F_0^*(L_0-\lambda+ i\varepsilon)^{-1} J\eta = \frac{1}{\sqrt{\pi}} \int_\Sigma
B(\lambda)^{-1} (\mu-\lambda+i\varepsilon)^{-1} B(\mu) \;\!N(\mu)\;\!\eta(\mu) \;\!\d \mu.
$$
\end{lemma}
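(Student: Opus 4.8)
The plan is to reduce the identity to the computation already carried out in the proof of Lemma~\ref{lem5}, using that for a fixed $\varepsilon>0$ the factor $(L_0-\lambda+i\varepsilon)^{-1}$ is simply a bounded multiplication operator in $\Hrond$ with smooth symbol.

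First I would remove the resolvent of $L_0$. By \eqref{defL}, $L=J^*L_0J$ is multiplication by the variable in $L^2(\R;\C^2)$, hence $(L_0-\lambda+i\varepsilon)^{-1}J\eta=J\big(L-\lambda+i\varepsilon\big)^{-1}\eta=J\zeta$, where $\zeta$ is the function $\mu\mapsto(\mu-\lambda+i\varepsilon)^{-1}\eta(\mu)$. Because $\lambda\in\Sigma\subset\R$ and $\varepsilon>0$, the scalar factor $(\mu-\lambda+i\varepsilon)^{-1}$ is smooth and bounded on all of $\R$, so $\zeta$ again belongs to $\SS$; in particular $\F_0^*J\zeta\in\dom(H_0)\subset\dom(H^*)$ and $\Gamma_2\F_0^*J\zeta$ is well defined.

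Next I would note that the proof of Lemma~\ref{lem5} in fact proves more. Repeating its computation with $\delta_\varepsilon(L_0-\lambda)J\eta=J\big(\delta_\varepsilon(\cdot-\lambda)\eta\big)$ replaced by an arbitrary $J\zeta$ with $\zeta\in\SS$ — that is, the derivation up to the change of variables $\mu=\mp\sqrt{p^2+m^2}$, followed by the identification of the resulting $\C^2$-valued integrand carried out at the end of that proof via the explicit form \eqref{xixi} of $\xi^\pm$ and the definition of $N$ — one obtains the multiplier-independent identity
\[
\Gamma_2\F_0^*J\zeta=\frac{1}{\sqrt\pi}\int_\Sigma B(\mu)\,N(\mu)\,\zeta(\mu)\,\d\mu ,\qquad\zeta\in\SS .
\]
(Lemma~\ref{lem5} is recovered by taking $\zeta=\delta_\varepsilon(\cdot-\lambda)\eta$, left-multiplying by $B(\lambda)^{-1}$, and letting $\varepsilon\searrow0$.)

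Finally I would substitute $\zeta(\mu)=(\mu-\lambda+i\varepsilon)^{-1}\eta(\mu)$ into this identity, pull the scalar $(\mu-\lambda+i\varepsilon)^{-1}$ in front of the matrix $B(\mu)N(\mu)$, and left-multiply by the fixed invertible matrix $B(\lambda)^{-1}$; this yields exactly the stated formula. The only step calling for a little care is the first one: one must check that transferring $(L_0-\lambda+i\varepsilon)^{-1}$ across $J$ genuinely produces a $\zeta$ lying in $\SS$, so that the computation recalled from Lemma~\ref{lem5} applies verbatim and $\Gamma_2$ may be applied legitimately. This is exactly where the hypotheses $\varepsilon>0$ and $\eta\in\SS$, together with the spectral-representation identity \eqref{defL}, are used; no new analytic estimate is required.
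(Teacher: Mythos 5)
Your proposal is correct and follows essentially the same route as the paper: the paper's proof of Lemma~\ref{lem6} is precisely the computation of Lemma~\ref{lem5} redone with $\delta_\varepsilon(\mu-\lambda)$ replaced by $(\mu-\lambda+i\varepsilon)^{-1}$, which is exactly what your multiplier-independent identity $\Gamma_2\F_0^*J\zeta=\pi^{-1/2}\int_\Sigma B(\mu)N(\mu)\zeta(\mu)\,\d\mu$ for $\zeta\in\SS$ packages. Your additional check that $\mu\mapsto(\mu-\lambda+i\varepsilon)^{-1}\eta(\mu)$ remains in $\SS$ (so that $\Gamma_2$ is legitimately applied) is a harmless refinement that the paper leaves implicit.
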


\begin{proof}
By a computation similar to the previous proof, one obtains that
\begin{align*}
& \Gamma_2 \F_0^* (L_0-\lambda+i\varepsilon)^{-1}J \eta \\
= & \frac{1}{\sqrt{2\pi}} \int_{-\infty}^{-m}
\sqrt[4]{\frac{\mu^2}{\mu^2-m^2}} \;\!(\mu-\lambda+i\varepsilon)^{-1}  \Big[\eta_1(\mu) \xi^+\big(-\sqrt{\mu^2 - m^2}\big) +\eta_2(\mu) \xi^+\big(\sqrt{\mu^2 - m^2}\big)\Big] \d \mu
\\
& + \frac{1}{\sqrt{2\pi}} \int^{\infty}_{m}
\sqrt[4]{\frac{\mu^2}{\mu^2-m^2}} \;\!(\mu-\lambda+i\varepsilon)^{-1}  \Big[ \eta_1(\mu) \xi^-\big(-\sqrt{\mu^2 - m^2}\big)+ \eta_2(\mu)  \xi^-\big(\sqrt{\mu^2 - m^2}\big) \Big] \d \mu \\
= & \frac{1}{\sqrt{2\pi}} \int_{-\infty}^{-m}
(\mu-\lambda+i\varepsilon)^{-1}
\frac{1}{\sqrt{2}}\;\! \diag\left(
\sqrt[4]{\frac{\mu-m}{\mu+m}}, \sqrt[4]{\frac{\mu+m}{\mu-m}}\right)
\bpm 1 & 1 \\ -i & i \epm
\bpm \eta_1(\mu)\\ \eta_2(\mu)
\epm \d \mu
\\
& + \frac{1}{\sqrt{2\pi}} \int^{\infty}_{m}
(\mu-\lambda+i\varepsilon)^{-1} \frac{1}{\sqrt{2}}\;\! \diag\left(
\sqrt[4]{\frac{\mu-m}{\mu+m}}, \sqrt[4]{\frac{\mu+m}{\mu-m}}\right)
\bpm -i & i \\ 1 & 1 \epm
\bpm \eta_1(\mu)\\ \eta_2(\mu)
\epm \d \mu \\
= & \frac{1}{\sqrt{\pi}} \int_\Sigma
(\mu-\lambda+i\varepsilon)^{-1} B(\mu)\;\!N(\mu) \;\!\eta(\mu) \;\!\d \mu
\end{align*}
which leads directly to the expected result.
\end{proof}

For $\varepsilon >0$, let us finally define the integral operator $\Theta_\varepsilon$ on $\SS$ which kernel is
\begin{equation*}
\Theta_\varepsilon(\lambda,\mu):= \frac{i}{2\pi}
\ B(\lambda)^{-1} (\mu-\lambda+i\varepsilon)^{-1} B(\mu).
\end{equation*}
A straightforward computation leads then to the following equality for any
$\eta \in \SS$ and $\lambda \in \Sigma$:
\[
\big[\lim_{\varepsilon \searrow 0} \Theta_\varepsilon \eta\big](\lambda)
\equiv [\Theta_0\eta](\lambda)
=  \frac{i}{2\pi}  B(\lambda)^{-1} \
\Pv \!\!\int_\Sigma \frac{1}{\mu-\lambda}\;\! B(\mu) \;\!\eta(\mu)\;\!\d \mu + \frac{1}{2} \eta(\lambda).
\]
Starting from \eqref{ww2} and by taking Lemmas \ref{lem5} and~\ref{lem6} into account, one can already guess that the singular integral operator $\Theta_0$ is going to play a central role in the expression for the wave operator.
However, let us observe that the maps $\lambda \mapsto B(\lambda)$ and $\lambda \mapsto B(\lambda)^{-1}$ are not bounded as $\lambda \to \pm m$. Therefore, it is not very easy to deal with the above kernel. For that reason, our last task is to get a better understanding of this integral operator by looking at it in another unitarily equivalent representation.

\subsection{The upside-down representation}

Let us finally define the unitary operator $\V: L^2(\Sigma;\C^2) \to L^2(\R;\C^2)$ given for $\eta \in L^2(\Sigma;\C^2)$ and $x \in \R$ by
\[
\big[\V \eta\big](x) :=\sqrt{2m} \;\!\frac{\e^{x /2}}{\e^{x}-1} \, \eta\Big(m\frac{\e^{x}+1}{\e^{x}-1}\Big)
\]
The special feature of this representation is that the values $\pm m$ are sent to $\pm \infty$
while any neighbourhood of the points $\pm \infty$ is then located near the point $0$.
The adjoint of the operator $\V$ is provided for $\zeta \in L^2(\R;\C^2)$ and $\lambda \in \Sigma$ by the expression
\[
\big[\V^*\zeta\big](\lambda) = \sqrt{2m} \;\!\sqrt{\frac{\lambda+m}{\lambda-m}}\;\!\frac{1}{\lambda+m}\;\!
\,\zeta\Big(\ln\Big[\frac{\lambda+m}{\lambda-m}\Big]\Big).
\]
We shall now compute the kernel of the operator $\V \Theta_0 \V^*$, and observe that this new kernel has a very simple form.

For that purpose, let us use the standard notation $\X$ for the self-adjoint operator on $L^2(\R)$ of multiplication by the variable, and by $\D$ the self-adjoint operator on the same space corresponding to the formal expression $-i\frac{\d}{\d x}$.
For a measurable (matrix-valued) function $K:\R\to M_2(\C)$ we denote by $K(\X)$ the operator of pointwise multiplication
by the matrix $K(\cdot)$ in $L^2(\R;\C^2)$, and by $K(\D)$ we denote the operator $\F^* K(\X)\F$,
where $\F$ is the Fourier transformation in $L^2(\R;\C^2)$.

One checks, by a direct substitution, that for any measurable function $\rho:\Sigma\to M_2(\C)$ one has
\begin{equation}
        \label{vv1}
\V\rho(L)\V^*=\rho\Bigg(m\frac{\e^\X+1}{\e^\X-1}\Bigg).
\end{equation}
In particular, such a relation holds for $\rho=T_0^\CD$.
Furthermore, for any $\zeta=(\zeta_1,\zeta_2) \in C_c^\infty(\R\setminus\{0\};\C^2)$ and $x \in \R^*\setminus \{0\}$,
it can be obtained straightforwardly that
\begin{align*}
& [\V \Theta_0 \V^*\zeta](x) \\
=& \frac{i}{8\pi} \Pv \!\!\int_\R
\bpm
\frac{-1}{\sinh((y-x)/4)} + \frac{1}{\cosh((y-x)/4)} & 0 \\
0 & \frac{-1}{\sinh((y-x)/4)} + \frac{-1}{\cosh((y-x)/4)}
\epm
\zeta(y)\;\!\d y + \frac{1}{2} \zeta(x)\\
= & \frac{i}{8\pi}\bpm g_+\star \,\zeta_1 \\ g_- \star\, \zeta_2\epm (x) + \frac{1}{2}\, \zeta(x),
\end{align*}
where
\[
g_\pm(x)=\dfrac{1}{\sinh (x/4)} \pm \dfrac{1}{\cosh (x/4)},
\]
and $\star$ means the (distributional)
convolution product. Using then the identity
\[
g\star f =[\F g](\D)f
\]
and the explicit expressions for the Fourier images of $g_\pm$ from \cite[Table 20.1]{Jef}
we obtain $\V \Theta_0 \V^* = R(\D)^*$
with $R(\cdot)$ defined for all $x \in \R$ by
\begin{equation}\label{defR}
R(x):=\frac{1}{2} \Bigg[
\bpm
\tanh(2\pi x) - i\cosh(2\pi x)^{-1} & 0 \\
0 & \tanh(2\pi x) +i \cosh(2\pi x)^{-1}
\epm + 1
\Bigg],
\end{equation}

We are now ready to prove the existence of the wave operator and a new representation for it.

\begin{prop}
The wave operator $W_-^\CD$ exists and is equal to the operator $W_-\big(H^{\CD},H_0\big)$ defined in \eqref{defW}.
In addition, the following equality holds in $L^2(\R;\C^2)$:
\begin{equation}\label{egalite}
\V\;\!N\;\!J^*\;\!\F_0\;\!(W_-^{\CD}-1)\;\!\F_0^*\;\!J\;\!N^*\;\!\V^*
= R(\D)\;\!T_0^{\CD}\Bigg(m\frac{\e^\X+1}{\e^\X-1}\Bigg).
\end{equation}
\end{prop}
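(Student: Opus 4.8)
The plan is to feed the pointwise limits supplied by Lemmas~\ref{lem-unt}, \ref{lem5} and~\ref{lem6} into the stationary formula \eqref{ww2}, to read off the closed expression \eqref{egalite}, and then to harvest from that expression both the boundedness (hence the existence) of $W_-^\CD$ and its identification with the time-dependent wave operator. Since $J\SS$ is dense in $\Hrond$ it is enough to compute $\langle \F_0(W_-^\CD-1)\F_0^*J\eta,J\eta'\rangle$ for $\eta,\eta'\in\SS$, for which \eqref{ww2} applies with $\varphi=J\eta$, $\psi=J\eta'$. For each fixed $\lambda\in\Sigma$ the three $\varepsilon$-dependent factors of the integrand converge as $\varepsilon\searrow0$: by Lemma~\ref{lem-unt}, $T_\varepsilon^\CD(\lambda)\to T_0^\CD(\lambda)$; by Lemma~\ref{lem5}, $B(\lambda)^{-1}\Gamma_2\F_0^*\delta_\varepsilon(L_0-\lambda)J\eta\to\tfrac{1}{\sqrt\pi}N(\lambda)\eta(\lambda)$; and rewriting Lemma~\ref{lem6} with the kernel $\Theta_\varepsilon$ (using $B(\lambda)^{-1}(\mu-\lambda+i\varepsilon)^{-1}B(\mu)=-2\pi i\,\Theta_\varepsilon(\lambda,\mu)$), one gets $iB(\lambda)^{-1}\Gamma_2\F_0^*(L_0-\lambda+i\varepsilon)^{-1}J\eta'=2\sqrt\pi\,[\Theta_\varepsilon N\eta'](\lambda)\to 2\sqrt\pi\,[\Theta_0 N\eta'](\lambda)$. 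Since $\tfrac12\cdot\tfrac1{\sqrt\pi}\cdot 2\sqrt\pi=1$, this gives
\[
\langle \F_0(W_-^\CD-1)\F_0^*J\eta,J\eta'\rangle=\int_\Sigma\big\langle T_0^\CD(\lambda)N(\lambda)\eta(\lambda),[\Theta_0 N\eta'](\lambda)\big\rangle_{\C^2}\,\d\lambda=\big\langle N^*\Theta_0^*T_0^\CD(L)N\eta,\eta'\big\rangle,
\]
that is, $J^*\F_0(W_-^\CD-1)\F_0^*J=N^*\Theta_0^*T_0^\CD(L)N$ on $\SS$.

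The step requiring care is the exchange of $\lim_{\varepsilon\searrow0}$ with the $\lambda$-integral in \eqref{ww2}, i.e.\ a dominated-convergence argument showing that the $\lambda$-integral of the pointwise limit equals the limit of the integrals. The delicate feature is that $B(\lambda)^{\pm1}$ blow up as $\lambda\to\pm m$; however, $\eta,\eta'\in\SS$ are supported away from $\pm m$, so, exactly as in the proof of Lemma~\ref{lem5}, the factor $B(\lambda)^{-1}\Gamma_2\F_0^*\delta_\varepsilon(L_0-\lambda)J\eta$ is dominated, uniformly in $\varepsilon$, by a $|\lambda\mp m|^{-1/4}$-type singularity times a rapidly decaying tail off $\supp\eta$, while $[\Theta_0 N\eta'](\lambda)$ satisfies a similar bound (the principal-value integral has no singularity when $\lambda$ is near $\pm m$, since then $\mu-\lambda$ stays away from $0$ on $\supp\eta'$). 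The product is integrable on $\Sigma$ and admits an $\varepsilon$-independent majorant, which legitimizes the computation above; the same local-uniform-in-$\lambda$ convergence is already exploited in the proofs of Lemmas~\ref{lem5}--\ref{lem6}.

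Next I would pass to the upside-down representation. Conjugating the identity of the first paragraph by $\V N(\,\cdot\,)N^*\V^*$, and using $\V\Theta_0\V^*=R(\D)^*$ (hence $\V\Theta_0^*\V^*=R(\D)$) together with the substitution rule \eqref{vv1} applied to $\rho=T_0^\CD$, one obtains on a dense set
\[
\V N J^*\F_0(W_-^\CD-1)\F_0^*JN^*\V^*=\V\,\Theta_0^*T_0^\CD(L)\,\V^*=R(\D)\,T_0^\CD\!\Big(m\tfrac{\e^\X+1}{\e^\X-1}\Big),
\]
which is precisely \eqref{egalite}. The right-hand side is a product of two bounded operators: $R(\D)$ is bounded because $R(\cdot)$ is a bounded matrix-valued function by \eqref{defR}, and $T_0^\CD\big(m\tfrac{\e^\X+1}{\e^\X-1}\big)$ is bounded because $1+T_0^\CD(\lambda)$ is unitary for every $\lambda\in\Sigma$ by Lemma~\ref{lem-unt}, so $\|T_0^\CD(\lambda)\|\le2$. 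Consequently $W_-^\CD-1$, a priori defined only as a sesquilinear form on a dense set, extends to a bounded operator; hence $W_-^\CD$ is a well-defined bounded operator on $\H$ and \eqref{egalite} holds on all of $L^2(\R;\C^2)$ by continuity.

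Finally I would identify $W_-^\CD$ with $W_-(H^\CD,H_0)$. The time-dependent wave operator \eqref{defW} exists and is complete since $(H^\CD-i)^{-1}-(H_0-i)^{-1}$ is finite rank; moreover Proposition~\ref{propspectral} gives $\spec_{sc}(H^\CD)=\emptyset$, and the boundary-value (limiting-absorption) estimates for $H_0$ and $H^\CD$ on $\Sigma$ needed to make sense of the stationary formula defining $W_-^\CD$ have been obtained along the way. The general stationary scattering theory for trace-class perturbations (see~\cite{Y}) then gives $W_-^\CD=W_-(H^\CD,H_0)$; the operator $W_+^\CD$ is treated in exactly the same manner. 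The only genuinely technical ingredient in the whole argument is the dominated-convergence bookkeeping of the second paragraph: controlling the $\varepsilon\searrow0$ limit uniformly in $\lambda$ in spite of the integrable-but-unbounded factors $B(\lambda)^{\pm1}$ near $\pm m$ and the singular kernel $\Theta_0$. Once this is secured, the algebraic reduction to \eqref{egalite} and the time-dependent identification are routine.
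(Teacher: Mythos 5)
Your proposal follows essentially the same route as the paper's proof: substitute the limits from Lemmas~\ref{lem-unt}, \ref{lem5} and~\ref{lem6} into \eqref{ww2} for $\eta,\eta'\in\SS$, recognize the singular integral operator $\Theta_0$, conjugate by $N$ and $\V$ to reach \eqref{egalite}, and invoke Yafaev's stationary theory for the existence of $W_-^{\CD}$ and its identification with $W_-(H^{\CD},H_0)$. The extra bookkeeping you supply (the constant check $\tfrac12\cdot\tfrac1{\sqrt\pi}\cdot2\sqrt\pi=1$ and the integrability discussion near $\pm m$) is correct and merely fills in details the paper leaves implicit.
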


\begin{proof}
Starting from the equality \eqref{ww2} and by taking Lemmas~\ref{lem5} and~\ref{lem6} into account, one easily deduces that the following equalities hold for any $\eta, \eta'$ in the dense subset $\SS$ of $L^2(\Sigma;\C^2)$:
\begin{align*}
&\langle \F_0(W_-^{\CD}-1)\F_0^*J\eta, J\eta'\rangle_{_{\!\!\Hrond}} \\
=&   \frac{1}{2}\int_\Sigma \lim_{\varepsilon \searrow 0}
\big\langle  T_\varepsilon^{\CD}(\lambda) B(\lambda)^{-1} \Gamma_2 \F_0^* \delta_\varepsilon(L_0-\lambda)J\eta,
i B(\lambda)^{-1}\Gamma_2 \F_0^*(L_0-\lambda+ i\varepsilon)^{-1} J\eta' \big\rangle_{\C^2}\;\!
\d \lambda \\
=&  \int_\Sigma
\big\langle  T_0^{\CD}(\lambda) N(\lambda)\eta(\lambda),
[\Theta_0 N\eta'](\lambda) \big\rangle_{\C^2}\;\!
\d \lambda.
\end{align*}
Now, the existence of the limit (inside the integral sign) for any $\lambda \in \Sigma$ and any $\eta,\eta'\in \SS$ already shows the existence of the stationary wave operator $W_-^{\CD}$, see \cite[Def.~2.7.2]{Y}. In addition, its equality with
$W_-\big(H^{\CD},H_0\big)$ follows from \cite[Thm.~5.2.4]{Y}. Finally,  relation \eqref{egalite} can be deduced by a conjugation with the unitary operators $N$ and $\V$.
\end{proof}

It only remains to link the multiplication operator $T_0^{\CD}(L)$ in $L^2(\Sigma;\C^2)$ with the scattering operator $S^\CD := S(H^\CD,H_0)$ introduced in \eqref{defS}. For that purpose, recall that since the operator $S^{\CD}$ commutes with $H_0$, the operator $\F_0\;\!S^{\CD}\;\!\F_0^*$ commutes with $L_0$ and, therefore,
corresponds to an operator of multiplication in $\Hrond$. For that reason, one usually writes $\F_0\;\!S^{\CD}\;\!\F_0^* = S^{\CD}(L_0)$, where $S^{\CD}(\lambda)$ is a unitary operator in $\Hrond(\lambda)$
for almost every $\lambda \in \Sigma$ which is called the scattering matrix at energy $\lambda$.

\begin{lemma}
For almost every $\lambda \in \Sigma$, the following equality holds:
\begin{equation}\label{eqTetS}
S^\CD(\lambda)=1+ N(\lambda)^*\;\!T_0^\CD(\lambda)\;\!N(\lambda).
\end{equation}
\end{lemma}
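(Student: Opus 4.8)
The plan is to identify the scattering matrix through the stationary representation of the wave operators and to read off the relation from the intertwining identity $S^\CD = (W_+^\CD)^* W_-^\CD$. First I would recall that, exactly as for $W_-^\CD$, the operator $W_+^\CD$ admits a stationary expression built from $\gamma(z)$, $M(z)$ and the same limiting procedure; repeating the computations leading to \eqref{ww2}, but now with $z=\lambda+i\varepsilon$ and the opposite sign of $i\varepsilon$ in the relevant places, one obtains a formula for $\F_0(W_+^\CD-1)\F_0^*$ of the same structure. The key point is that the bounded multiplication factor $T_0^\CD(\lambda)$ appearing there is \emph{the same} matrix-valued function on $\Sigma$ (this is consistent with \eqref{tadj} and with $M(\lambda-i0)=M(\lambda+i0)^*$), while the singular-integral factor $\Theta_0$ changes into its formal adjoint, i.e. the Privalov term $\frac12\eta(\lambda)$ keeps its sign but the principal value part flips sign.

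Next I would compute $S^\CD$ directly from the two stationary formulae. Since $W_\pm^\CD = 1 + (\text{multiplication by } T_0^\CD(\lambda)) \circ \Theta_0^{\pm}$ in the appropriate representation, with $\Theta_0^{+}$ and $\Theta_0^{-}$ differing only by the sign of the principal value part, one has $(\Theta_0^{+})^* + \Theta_0^{-} = 1$ on $\Sigma$ fibrewise (the principal value contributions cancel and the two Privalov halves add up). Feeding this into $S^\CD = (W_+^\CD)^* W_-^\CD$, all the off-diagonal cross terms involving $\Theta_0^{\pm}$ reorganize — using that $T_0^\CD(\lambda)$ is a multiplication operator and that $1+T_0^\CD(\lambda)$ is unitary by Lemma \ref{lem-unt} — so that the net effect is multiplication by $1 + T_0^\CD(\lambda)$ on each fibre, modulo the change of basis. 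This is the standard mechanism by which, in a stationary scattering scheme, the scattering matrix equals one plus the on-shell $T$-matrix.

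Finally, I would carefully track the unitary identifications between the different representations. In the excerpt, $W_-^\CD$ was transported to $L^2(\R;\C^2)$ through $\V N J^*\F_0$, and the multiplication factor there is $T_0^\CD\big(m\tfrac{\e^\X+1}{\e^\X-1}\big)$, obtained from $T_0^\CD(L)$ via $N$ and $\V$ by \eqref{vv1}. Undoing $\V$ and $\F_0^{-1}$ brings us back to $\Hrond(\lambda)$, where the basis-change between $J^*$-coordinates (in which the multiplier is $T_0^\CD(\lambda)$) and the intrinsic coordinates of $\Hrond(\lambda)$ is precisely implemented by $N(\lambda)$. Hence the multiplication operator $1+T_0^\CD(L)$ on $L^2(\Sigma;\C^2)$ corresponds, fibrewise in $\Hrond(\lambda)$, to $1 + N(\lambda)^* T_0^\CD(\lambda) N(\lambda)$, which is \eqref{eqTetS}.

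The main obstacle is the bookkeeping: one must verify the sign of the principal value term in the stationary formula for $W_+^\CD$ and confirm the fibrewise identity $(\Theta_0^{+})^* + \Theta_0^{-} = 1$, together with the absence of any spurious contribution from the interval $(-m,m)$ (where the integrand vanishes a.e., as already noted after \eqref{ww2}). Once these are in place, the computation of $(W_+^\CD)^* W_-^\CD$ is algebraic, and the only remaining care is the correct placement of $N(\lambda)$ versus $N(\lambda)^*$ when passing between $L^2(\Sigma;\C^2)$ and $\Hrond$; the definition of $J$ and $\U$ fixes this unambiguously.
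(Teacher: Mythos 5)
Your strategy --- extracting $S^\CD=(W_+^\CD)^*W_-^\CD$ from stationary formulae for \emph{both} wave operators --- is different from the paper's and, as written, has a genuine gap at its central step. The paper never forms the product of the two wave operators: it invokes the invariance principle with $\alpha(\lambda)=\ln\big(\frac{\lambda-m}{\lambda+m}\big)$ to write $S^\CD-1=\slim_{t\to\infty}\e^{it\alpha(H_0)}(W_-^\CD-1)\e^{-it\alpha(H_0)}$, transports this through \eqref{egalite} using $\e^{it\alpha(L)}\V^*=\V^*\e^{-it\X}$ so that the conjugation becomes the translation $R(\D)\mapsto R(\D+t)$, and concludes from $\slim_{t\to\infty}R(\D+t)=1$ (because $R(+\infty)=1$). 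Only the already established formula for $W_-^\CD$ enters.

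The gap in your version is the sentence claiming that the cross terms in $(W_+^\CD)^*W_-^\CD$ ``reorganize'' into multiplication by $1+T_0^\CD(\lambda)$. In the upside-down representation one has $W_-^\CD-1=R(\D)\,T(\X)$ with $T(\X)=T_0^\CD\big(m\frac{\e^\X+1}{\e^\X-1}\big)$, and the analogous formula for $W_+^\CD$ is $W_+^\CD-1=\big(1-R(\D)\big)T(\X)^*$ (this is what is forced by $W_+^\CD=W_-^\CD(S^\CD)^*$ together with the unitarity of $1+T_0^\CD$ from Lemma~\ref{lem-unt}); hence
\[
(W_+^\CD)^*W_-^\CD-1-T(\X)=R(\D)T(\X)-T(\X)R(\D)^*+T(\X)R(\D)T(\X)-T(\X)R(\D)^*R(\D)T(\X),
\]
and the vanishing of the right-hand side is \emph{not} an algebraic identity: since $T(\X)$ does not commute with $R(\D)$, it amounts to controlling compositions of principal-value kernels with multiplication operators, which is exactly as hard as the statement to be proved. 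Two subsidiary assertions are also incorrect: (i) the matrix factor in the stationary formula for $W_+^\CD$ is not the same $T_0^\CD(\lambda)$ but essentially its adjoint, because the relevant boundary value is $\big(DM(\lambda-i0)-C\big)^{-1}D=\big[\big(DM(\lambda+i0)-C\big)^{-1}D\big]^*$ by \eqref{tadj}; (ii) the identity $(\Theta_0^+)^*+\Theta_0^-=1$ fails: passing from $(\mu-\lambda+i\varepsilon)^{-1}$ to $(\mu-\lambda-i\varepsilon)^{-1}$ flips the sign of the local Sokhotski--Plemelj term $\frac12\eta(\lambda)$ and \emph{not} of the principal-value part, and after taking adjoints the weights $B(\lambda)^{-1}(\cdot)B(\mu)$ become $B(\lambda)(\cdot)B(\mu)^{-1}$, so the two principal-value contributions do not cancel. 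Your final paragraph on the placement of $N(\lambda)$ versus $N(\lambda)^*$ is correct and coincides with the paper's bookkeeping, but the core of the lemma is not established by your argument; if you wish to keep the $W_+^*W_-$ route you must justify the composition of the singular integral operators directly, whereas the paper's detour through the invariance principle is what makes the proof short.
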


\begin{proof}
We proceed by using the intertwining relation and the invariance principe.
It is well known that
if $\alpha : \Sigma\to \R$ is smooth and has a positive derivative, then the scattering operator $S^{\CD}$
is the strong limit of the operators $\e^{it\alpha(H_0)} \;\!W_-^{\CD} \;\!\e^{-it\alpha(H_0)}$
as $t \to \infty$, see for example \cite[Sec.~2.6]{Y}.
Let us consider the function $\alpha : \Sigma \to \R$ defined by $\alpha(\lambda) := \ln\big(\frac{\lambda-m}{\lambda+m}\big)$. Clearly, this function is smooth on $\Sigma$ with a positive derivative, which gives
\[
s-\lim_{t\to \infty}\e^{it\alpha(H_0)} \;\!\big(W_-^{\CD}-1\big) \;\!\e^{-it\alpha(H_0)} = S^{\CD}-1.
\]
We also observe that due to \eqref{vv1} we have
\begin{equation}\label{rel2}
\e^{it\alpha(L)}\V^*=\V^* \e^{-it\X}.
\end{equation}
Now, by using successively relations \eqref{rel1}, \eqref{egalite}, \eqref{vv1} \eqref{defL} and the commutativity of $\e^{-i\alpha(L)}$ with $N$ and $T_0^{\CD}(L)$,  \eqref{rel2}, and the usual relation between the operators $\X$ and $\D$, one infers that
\begin{align*}
S^\CD(L_0)-1
= & \F_0 (S^{\CD}-1)\F_0^*\\
= & s-\lim_{t \to \infty}\F_0\;\!\e^{it\alpha(H_0)} \;\!\big(W_-^{\CD}-1\big) \;\!\e^{-it\alpha(H_0)} \F_0^* \\
= & s-\lim_{t \to \infty} \e^{it\alpha(L_0)}\;\!\F_0 \;\!\big(W_-^{\CD}-1\big) \;\! \F_0^* \;\!\e^{-it\alpha(L_0)} \\
= & s-\lim_{t \to \infty} \e^{it\alpha(L_0)}
\;\!J\;\!N^*\;\!\V^*
\;\! R(\D)\;\!T_0^{\CD}\Bigg(m\frac{\e^\X+1}{\e^\X-1}\Bigg)
\;\!\V\;\!N\;\!J^*
\;\!\e^{-it\alpha(L_0)} \\
= & s-\lim_{t \to \infty} \e^{it\alpha(L_0)}
\;\!J\;\!N^*\;\!\V^*
\;\! R(\D)\;\!\V\;\!T_0^{\CD}(L)
\;\!N\;\!J^*
\;\!\e^{-it\alpha(L_0)} \\
= & s-\lim_{t \to \infty}
J\;\!N^*\;\!\e^{it\alpha(L)}\;\!\V^*\;\!R(\D)\;\!\V\;\!\e^{-it\alpha(L)}\;\!T_0^{\CD}(L)\;\!N\;\!J^* \\
= & s-\lim_{t \to \infty}
J\;\!N^*\;\!\V^*\;\!\e^{-it\X}\;\!R(\D)\;\!\e^{it\X}\;\!\V\;\!T_0^{\CD}(L)\;\!N\;\!J^* \\
= & s-\lim_{t \to \infty}
J\;\!N^*\;\!\V^*\;\!R(\D+t)\;\!\V\;\!T_0^{\CD}(L)\;\!N\;\!J^*.
\end{align*}
Finally, since $s-\lim_{t \to \infty}R(\D+t)=1$, it directly follows from the relation \eqref{defL} between $L$ and $L_0$ that
$$
S^\CD(L_0)-1 = \ J\;\!N^*\;\!T_0^{\CD}(L)\;\!N\;\!J^*
= N^*(L_0)\;\!T_0^{\CD}(L_0)\;\!N(L_0).
$$
The statement is then a consequence of the pointwise identification of these two multiplication operators.
\end{proof}

\begin{rem}
By taking the relation $ W_+^{\CD} = W_-^{\CD} (S^{\CD})^*$ into account, an explicit expression for $W_+^{\CD}$, similar to the one obtained in \eqref{egalite} for $W_-^{\CD}$, could also be derived.
\end{rem}

\begin{rem}\label{rem37}
As a consequence of Lemma \ref{lem-unt} and of the explicit formula \eqref{defR}, the maps
\[
x\mapsto R(x), \quad x\mapsto T^\CD_0\Big(m\dfrac{\e^x+1}{\e^x-1}\Big)
\]
are continuous on the whole real line and admit limits at $\pm\infty$.
This is an essential feature of these functions, and it plays an essential role in the subsequent algebraic construction.
\end{rem}

\section{Topological results}\label{sectop}

In this section, we briefly deduce the main corollary of the explicit formula \eqref{egalite}, and refer to \cite{KPR} and \cite{RT1} for a thorough description of the underlying algebraic framework.

Let us start by defining the following map: For $x,y \in \R$ one sets
\begin{equation}\label{defG}
\Gamma^{\CD}(x,y) = 1 + R(y) \;\!T_0^{\CD}\Big(m\frac{\e^x+1}{\e^x-1}\Big),
\end{equation}
where $R(\cdot)$ has been introduced in \eqref{defR} and $T_0^{\CD}(\cdot)$ has been computed explicitly in \eqref{defT0}.
It follows from Remark~\ref{rem37}
that $\Gamma^{\CD}$ can be continuously extended to a function on $\blacksquare:=[-\infty,+\infty]\times[-\infty,+\infty]$.
More precisely, one can set
$\Gamma^{\CD} \in C\big(\blacksquare; M_2(\C)\big)$ with $\Gamma^{\CD}(x,y)$ provided by \eqref{defG}.
The asymptotic values of this function can then be easily computed, namely
\begin{align*}
\Gamma_1^{\CD}(y):= &\Gamma^{\CD}(-\infty,y) =  1+ R(y) \;\!T_0^{\CD}(m)\\
\Gamma_2^{\CD}(x):= &\Gamma^{\CD}(x,+\infty) =  1 + T_0^{\CD}\Big(m\frac{\e^x+1}{\e^x-1}\Big)\\
\Gamma_3^{\CD}(y):= &\Gamma^{\CD}(+\infty,y) =  1 + R(y) \;\!T_0^{\CD}(-m)\\
\Gamma_4^{\CD}(x):= &\Gamma^{\CD}(x,-\infty) =  1.
\end{align*}
It is certainly worth emphasizing that $\Gamma_1^{\CD}$ and $\Gamma_3^{\CD}$ are related to the behavior of $T_0^{\CD}$ at the thresholds values $\pm m$, while $\Gamma_2^{\CD}$ is related to the scattering operator through the relation \eqref{eqTetS}. Note also that the precise value of $T_0^{\CD}(\pm m) \in M_2(\C)$ could be explicitly computed in terms of $C$ and $D$, but that this is not our concern here (a similar computation has been performed for example in \cite[Prop.~14]{PR} for the Aharonov-Bohm operator).

Let us now observe that the boundary $\square$ of $\blacksquare$ consists in the union of four parts $B_1\cup B_2 \cup B_3 \cup B_4$, with $B_1=\{-\infty\}\times [-\infty,+\infty]$, $B_2=[-\infty,+\infty]\times\{+\infty\}$, $B_3=\{+\infty\}\times [-\infty,+\infty]$ and $B_4=[-\infty,+\infty]\times\{-\infty\}$. Therefore, one can define the function
$$
\Gamma^{\CD}_{\!\!{}^\square}: \square \to M_2(\C)
$$
with $\Gamma^{\CD}_{\!\!{}^\square} \big|_{B_j}=\Gamma^{\CD}_j$. By construction, the function $\Gamma^{\CD}_{\!\!{}^\square}$ is continuous and takes values in $U(2)$, the subset of unitary matrices in $M_2(\C)$, or more precisely
$\Gamma^{\CD}_{\!\!{}^\square} \in C\big(\square;U(2)\big)$.
Note that the property $\Gamma^{\CD}_{\!\!{}^\square}(\theta)\in U(2)$ for any $\theta\in \square$ can be checked explicitly, but also directly follows from the unitarity of the image of the wave operators in the Calkin algebra.

One of the main result of the $C^*$-algebraic framework which has been developed for example in \cite{KPR} and \cite{RT1} is to relate the function $\Gamma^{\CD}_{\!\!{}^\square}$ to the number of bound states of $H^{\CD}$. More precisely, let us define the winding number $\wind[\Gamma^{\CD}_{\!\!{}^\square}]$ of the map
$$
\square \ni \theta \mapsto \det\big[\Gamma^{\CD}_{\!\!{}^\square}(\theta)\big] \in \T
$$
with orientation of $\square$ chosen clockwise. Here, $\T$ denotes the set of complex numbers of modulus $1$, and $\det$ denotes the usual determinant on $M_2(\C)$. Then, the following topological version of Levinson's theorem holds:

\begin{theorem}\label{thmtop}
For any admissible pair $(C,D)$ one has
\begin{equation}\label{levtop}
\wind[\Gamma^{\CD}_{\!\!{}^\square}] = - \#\spec_p (H^{\CD}).
\end{equation}
\end{theorem}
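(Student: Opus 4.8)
The plan is to read \eqref{egalite} as saying that, up to the unitary conjugation $A\mapsto\V N J^*\F_0\,A\,\F_0^*JN^*\V^*$, the wave operator $W_-^{\CD}$ belongs to a $C^*$-algebra of operators built from functions of the position and momentum operators, and then to feed this into the index-theoretic scheme of \cite{KPR,RT1}. Concretely, let $\mathcal E\subset\mathcal B\big(L^2(\R;\C^2)\big)$ be the $C^*$-algebra generated by the products $\phi(\X)\psi(\D)$ with $\phi,\psi$ continuous on $\R$ and admitting limits at $\pm\infty$. By Remark \ref{rem37} both $R(\cdot)$ and $x\mapsto T_0^{\CD}\big(m\frac{\e^x+1}{\e^x-1}\big)$ are continuous on $\R$ with limits at $\pm\infty$; hence $R(\D)\,T_0^{\CD}\big(m\frac{\e^\X+1}{\e^\X-1}\big)\in\mathcal E$, and by \eqref{egalite} the conjugate of $W_-^{\CD}$ lies in the unitization $\mathcal E^+=\C\,1+\mathcal E$. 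I would then invoke from \cite{KPR,RT1} the structural facts that $\mathcal E$ contains the ideal $\mathcal K$ of compact operators, that $\mathcal E^+/\mathcal K\cong C(\square)$ with $\square$ the boundary of $\blacksquare=[-\infty,+\infty]\times[-\infty,+\infty]$, and that under this isomorphism the image of $\phi(\X)\psi(\D)$ is the continuous function on $\square$ assembled from the four limiting behaviours of $\phi$ and $\psi$ along the sides $B_1,\dots,B_4$. Comparing with the computation of $\Gamma_1^{\CD},\dots,\Gamma_4^{\CD}$ carried out just above, this identifies the image of $W_-^{\CD}$ in $\mathcal E^+/\mathcal K$ with $\Gamma^{\CD}_{\!\!{}^\square}$.

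Next I would assemble the index identity. On the analytic side, completeness of the wave operators together with $\spec_{sc}(H^{\CD})=\emptyset$ (Proposition \ref{propspectral}) shows that $W_-^{\CD}$ is an isometry of $\H$ onto the absolutely continuous subspace of $H^{\CD}$, so that $\ker W_-^{\CD}=\{0\}$ while $\ker\big(W_-^{\CD}\big)^*$ is the point spectral subspace of $H^{\CD}$, finite dimensional by Proposition \ref{propspectral}. Thus $W_-^{\CD}$ is Fredholm with $\mathrm{index}\big(W_-^{\CD}\big)=-\#\spec_p(H^{\CD})$; in particular $1-W_-^{\CD}\big(W_-^{\CD}\big)^*$ is finite rank, whence $W_-^{\CD}$ is unitary modulo $\mathcal K$ and $\Gamma^{\CD}_{\!\!{}^\square}$ takes values in $U(2)$. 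On the algebraic side, the exact sequence $0\to\mathcal K\to\mathcal E^+\to C(\square)\to 0$ produces an index map $\mathrm{ind}\colon K_1\big(C(\square)\big)\to K_0(\mathcal K)\cong\Z$, whose defining property, applied to the lift $W_-^{\CD}$ of $\Gamma^{\CD}_{\!\!{}^\square}$, reads $\mathrm{ind}\big([\Gamma^{\CD}_{\!\!{}^\square}]_1\big)=\mathrm{index}\big(W_-^{\CD}\big)$ with the sign conventions of \cite{KPR}.

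Finally, since $\square$ is homeomorphic to a circle, $K_1\big(C(\square)\big)\cong\Z$, the isomorphism associating to a unitary-valued function the winding number of its determinant; with the clockwise orientation of $\square$ fixed in the statement this carries $[\Gamma^{\CD}_{\!\!{}^\square}]_1$ to $\wind[\Gamma^{\CD}_{\!\!{}^\square}]$. Chaining the last identities yields $\wind[\Gamma^{\CD}_{\!\!{}^\square}]=\mathrm{index}\big(W_-^{\CD}\big)=-\#\spec_p(H^{\CD})$, which is \eqref{levtop}.

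The step I expect to require the most care — though it amounts to correctly invoking \cite{KPR,RT1} rather than to a new argument — is the bookkeeping in the isomorphism $\mathcal E^+/\mathcal K\cong C(\square)$: one has to match the four sides of $\square$ with the limits of $R(\cdot)$ and of $x\mapsto T_0^{\CD}\big(m\frac{\e^x+1}{\e^x-1}\big)$ exactly as encoded by $\Gamma_1^{\CD},\dots,\Gamma_4^{\CD}$, and to check that the clockwise orientation together with the normalisation of the index map produce the sign in \eqref{levtop} and not its opposite. The continuity of $R$ and of $x\mapsto T_0^{\CD}\big(m\frac{\e^x+1}{\e^x-1}\big)$ up to $\pm\infty$ recorded in Remark \ref{rem37} is precisely what makes this possible, and it is also the feature that forces the threshold contributions at $\pm m$ to be incorporated automatically, without a separate half-bound-state analysis.
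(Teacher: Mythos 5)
Your proposal is correct and follows essentially the same route as the paper, which proves Theorem \ref{thmtop} precisely by placing the conjugated wave operator \eqref{egalite} into the $C^*$-algebra generated by products $\phi(\X)\psi(\D)$ and invoking the standard exact sequence, index map, and winding-number computation from \cite{KR5,RT1,KPR}; the paper merely cites these references where you spell the steps out. The one point worth keeping explicit, which you do handle via the Fredholm computation, is that $\#\spec_p(H^{\CD})$ must be counted with multiplicity so that it equals $\dim\ker\big(W_-^{\CD}\big)^*$, consistent with the multiplicity discussion at the end of the proof of Proposition \ref{propspectral}.
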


Once in the suitable $C^*$-algebraic framework, the proof of this statement is quite standard. We refer to \cite{KR5,RT1} for the construction of the framework and for the related proofs.

As a final remark, let us comment of the contribution of each term $\Gamma^{\CD}_j$ for $j \in \{1,2,3,4\}$ in the l.h.s.~of \eqref{levtop}. Clearly, the contribution of $\Gamma^{\CD}_4$ is trivial, while the contributions of $\Gamma^{\CD}_1$ and $\Gamma^{\CD}_3$ are directly related to the threshold effects at $\pm m$. These effects do depend on the choice of the pair $(C,D)$. For the contribution of $\Gamma^{\CD}_2$, let us observe that
$$
\det\big[\Gamma^{\CD}_2(\theta)\big] =  \det\Big[1 + T_0^{\CD}\Big(m\frac{\e^\theta+1}{\e^\theta-1}\Big)\Big]
= \det\Big[S^{\CD}\Big(m\frac{\e^\theta+1}{\e^\theta-1}\Big)
\Big]
$$
where relation \eqref{eqTetS} has been taken into account.
Thus, when $\theta$ varies from $-\infty$ to $+\infty$, one easily observes that the contribution due to $\Gamma^{\CD}_2$ is provided by two distinct contributions, the one coming from $\det\big[S^{\CD}(\lambda)\big]$ as $\lambda$ runs from $m$ to $+\infty$, and the one coming from $\det\big[S^{\CD}(\lambda)\big]$ as $\lambda$ goes from $-m$ to $-\infty$.
Note that the difference of relative orientation for the two contributions was already noticed in the literature, see for example \cite{C89,K90}. Note also that even if $S^{\CD}(-\infty)\neq S^{\CD}(+\infty)$, the equality
$\det\big[S^{\CD}(-\infty)\big]=\det\big[S^{\CD}(+\infty)\big]$ holds, as shown in the above computations.

\section*{Appendix}

\begin{proof}[Proof of Lemma~\ref{lem-unt}]
We first observe that for any $\lambda\in\Sigma$ we can write
\begin{equation}\label{eq-mle}
M(\lambda+i\varepsilon)=i B(\lambda)^2+  K(\lambda,\varepsilon),
\end{equation}
with $K(\lambda,\varepsilon)\to 0$ as $\varepsilon\searrow0$ locally uniformly in $\lambda \in \Sigma$. The scheme of the following argument is similar to the one already used in the proof of Proposition \ref{propspectral}.

a) Consider first the case $\det(D)\ne 0$, and set $\Lambda:=D^{-1} C$. Then, $\Lambda^*=\Lambda$ and
$T_\varepsilon^{\CD}(\lambda)=-2iB(\lambda) \big(M(\lambda+i\varepsilon)-\Lambda)^{-1}\!B(\lambda)$.
As $B(\lambda)>0$ for any $\lambda\in\Sigma$, the continuity statement follows
from the representation \eqref{eq-mle} and from the continuity of the maps $\Sigma\ni\lambda\to B(\lambda) \in M_2(\C)$ and $\Sigma\times[0,+\infty)\ni(\lambda,\varepsilon)\mapsto M(\lambda+i\varepsilon)\in M_2(\C)$.
In addition, we get after some elementary algebra that
\begin{align*}
& \big(1+T_0^{\CD}(\lambda)\big)\big(1+T_0^{\CD}(\lambda)\big)^*-1\\
= &4B(\lambda)\big( M(\lambda+i0)-\Lambda\big)^{-1}\Big(B(\lambda)^2-\dfrac{M(\lambda+i0)-M(\lambda+i0)^*}{2i}\Big)
\big( M(\lambda+i0)^*-\Lambda\big)^{-1}B(\lambda)\\
=&0,
\end{align*}
which shows the unitarity of $1+T_0^{\CD}(\lambda)$. The existence of the limits can be checked directly. In particular,
the equality $T_0^{\CD}(-\infty)=T_0^{\CD}(+\infty)$ follows from $M(-\infty+i0)=M(+\infty+i0)=\frac{i}{2}$.

b) We now consider the case $\dim[\ker (D)]=1$ and proceed as in \cite[Sec.~3]{PR}.
Let $I:\C \to \C^2$ be the identification of $\C$ with $\ker(D)^\bot$, and let $P$ denote its adjoint, {\it i.e.}~$P:\C^2 \to \C$ is the composition of the orthogonal projection onto $\ker(D)^\bot$ together with the identification of $I \C$ with $\C$.
Then, as shown in \cite[Eq. (12)]{PR}, one has $(DM(z)-C)^{-1}D=I\big(m(z)-\ell\big)^{-1}P$, where $\ell := (DI)^{-1}CI \in \R$
and $m(z)=PM(z)I\in \C$. One easily checks that $m(\lambda+i\varepsilon)=i\beta(\lambda)+k(\lambda,\varepsilon)$ with
$k(\lambda,\varepsilon)\to 0$ as $\varepsilon\searrow0$ locally uniformly in $\lambda \in \Sigma$, with $\beta(\lambda)=PB(\lambda)^2I\in \R$,
and $\beta(\lambda)>0$ for $\lambda\in\Sigma$. The continuity statement now follows
from the continuity of the maps $\Sigma\ni\lambda\to \beta(\lambda)\in \R$ and $\Sigma\times[0,+\infty)\ni(\lambda,\varepsilon)\mapsto m(\lambda+i\varepsilon)\in \C$.
By using $P^*=I$, we then obtain
\begin{align*}
&\big(1+T_0^{\CD}(\lambda)\big)\big(1+T_0^{\CD}(\lambda)\big)^*- 1\\
=&\Big(1-2iB(\lambda)I \big(m(\lambda+i0)-\ell\big)^{-1}PB(\lambda)\Big)\Big(1-2iB(\lambda)I \big(m(\lambda+i0)-\ell\big)^{-1}PB(\lambda)\Big)^*-1
\\
=&4\big(m(\lambda+i0)-\ell\big)^{-1}\big( m(\lambda+i0)^*-\ell\big)^{-1}
B(\lambda)I\Big(PB(\lambda)^2I-\dfrac{m(\lambda+i0)-m(\lambda+i0)^*}{2i}\Big)PB(\lambda) \\
=&0,
\end{align*}
which shows the unitarity. The existence of the limits at the boundary can be checked explicitly.

c) The remaining case $D=0$ is trivial.

\end{proof}

\end{document}